\documentclass[final]{elsarticle}

\usepackage{hyperref}
\usepackage{amssymb,amsmath,url,amsmath,graphicx,mathrsfs,color}
\usepackage{algorithm}
\usepackage{algpseudocode}
\usepackage{multirow}
\usepackage[table]{xcolor}
\algnewcommand\algorithmicinput{\textbf{Initialization:}}
\algnewcommand\Initialize{\item[\algorithmicinput]}

\journal{Theoretical Computer Science}









\bibliographystyle{elsarticle-num}
\def\CC{\mathbb C}
\def\RR{\mathbb R}

\def\cF{\mathcal F}

\def\cI{\mathcal I}

\def\cN{\mathcal N}

\def\cO{\mathcal O}

\def\cW{\mathcal W}

\newcommand{\xcp}{ x^{ \text{\tiny  \rm cx}}}

\newcommand{\xlp}{ x^{ \text{\tiny \rm  lp}}}

\def\bzero{\mathbf 0}

\newcommand{\raf}[1]{(\ref{#1})}

\newcommand{\cT}{\ensuremath{\mathcal{T}} }

\newcommand{\OPT}{\ensuremath{\textsc{Opt}}}

\newcommand{\re}{\ensuremath{\mathrm{Re}}}
\newcommand{\im}{\ensuremath{\mathrm{Im}}}

\newtheorem{theorem}{Theorem}
\newtheorem{lemma}{Lemma}

\newtheorem{definition}{Definition}
\newtheorem{proposition}{Proposition}
\newtheorem{remark}{Remark}

\newenvironment{customlem}[1]
{\innercustomlem}
{\endinnercustomlem}

\begin{document}

\begin{frontmatter}

\title{Complex-demand  Scheduling Problem  with Application in \mbox{Smart Grid}\footnote{This paper appears in Theoretical Computer Science (DOI: 10.1016/j.tcs.2018.08.023). A preliminary version appeared in the 22nd International Conference on Computing and Combinatorics, COCOON 2016, Ho Chi Minh City, Vietnam, August 2-4, 2016.}}


\author{Majid Khonji} 
\ead{majid.khonji@ku.ac.ae}
\author{Areg Karapetyan}  
\ead{areg.karapetyan@ku.ac.ae}
\author{Khaled Elbassioni}
\address{Masdar Institute, Khalifa University of Science and Technology, Abu Dhabi, UAE}
\ead{khaled.elbassioni@ku.ac.ae}

\author{Sid Chi-Kin Chau}
\address{Research School of Computer Science, Australian National University, Canberra, Australia}
\ead{sid.chau@anu.edu.au}


\begin{abstract}

We consider the problem of scheduling complex-valued demands over a discretized time horizon. Given a set of users, each user is associated with a set of demands representing different power consumption preferences. A demand is represented by a complex number, a time interval, and a utility value obtained if it is satisfied. At each time slot, the magnitude of the total selected demands should not exceed a given generation capacity. This naturally captures the supply constraints in alternating current (AC) electric systems. In this paper, we consider maximizing the aggregate user utility subject to power supply limits over a time horizon. We present approximation algorithms characterized by the maximum angle $\phi$ between any two complex-valued demands. More precisely, a PTAS is presented for the case $\phi \in [0,\tfrac{\pi}{2}]$, a bi-criteria FPTAS for $\phi \in [0,{\pi} \mbox{-} \varepsilon]$ for any polynomially small $\varepsilon$, assuming the number of time slots in the discretized time horizon is a constant. Furthermore, if the number of time slots is part of the input, we present a reduction to the real-valued unsplittable flow problem on a path with only a constant  approximation ratio. Finally, we present a practical greedy algorithm for the single time slot case with an approximation ratio of $\tfrac{1}{2}\cos \frac{\phi}{2}$ and a running time complexity of only ${O}(N\log N)$, $N$ standing for the aggregate number of user demands, which can be implemented efficiently in practice.


%
%
%
%
\end{abstract}

\begin{keyword}
Algorithms, Scheduling, Smart Grid, Unsplittable Flow, Knapsack
\end{keyword}

\end{frontmatter}


\section{Introduction}
A key aspect of the emerging smart grid is to modulate users' electricity consumption around the available power supply. A microgrid could run short of power due to emergency conditions, high electricity purchase price in the bulk market, or volatility of renewable sources. In such cases, consumers' deferrable loads, such as dish washers and electric vehicles, can be scheduled according to the grid's operational or economic conditions. This, in fact, models the day-ahead electric market at the distribution network whereby customers provide their deferrable demand preferences along with the amount they are welling to pay, and the grid operator decides the best allocation.

Although resource allocation and scheduling mechanisms have been well-studied in various systems from transportation to communication networks, the rise of the smart grid presents a new range of algorithmic problems, which are a departure from these systems. One focal difference is the presence of periodic time-varying entities (e.g., current, power, voltage) in AC electric systems, which are often expressed in terms of non-positive real, or even complex numbers. In power terminology \cite{GS94power}, the real component of the complex number is called the {\em active} power, the imaginary is known as {\em reactive } power, and the magnitude as {\em apparent} power. For example, purely resistive appliances have positive active power and zero reactive power. Appliances and instruments with capacitive or inductive components have non-zero reactive power, depending on the phase lag with the input power. Machinery, such as in factories, has large inductors, and hence has positive power demand. On the contrary, shunt-capacitor equipped electric vehicle charging stations can generate reactive power. 

We consider a variable power generation capacity over a discrete time horizon. Every user of the smart grid is associated with a set of demand preferences, wherein a demand is represented by a complex-valued number, a time interval at which it should be supplied, and a utility value obtained if it is satisfied. Some demands are inelastic (i.e., indivisible) in a sense that are either fully satisfied, or completely dropped. At each time slot, the magnitude of the total satisfied demands among all different preferences  should not exceed the current net generation capacity of the grid. This captures the variation in supply constraints over time in alternating current (AC) electric systems, and allows to model the demand response management in power systems\cite{DR09}.


Conventionally, demands in AC systems are represented by complex numbers in the first and fourth quadrants of the complex plane. We note that our problem is invariant, when the arguments of all demands are shifted by the same angle. 
For convenience, we assume the demands are rotated such that one of them is aligned along the positive real axis. 
In realistic setting of power systems, the active power demand is positive, but the power factor (i.e., the cosine of the demand's argument) is bounded from below by a certain threshold, which is equivalent to restricting the argument of complex-valued demands.

We present approximation algorithms characterized by the maximum angle $\phi$ between any two complex-valued demands. More precisely,  we present a PTAS for the case $\phi \in [0,\tfrac{\pi}{2}]$, a bi-criteria FPTAS for  $\phi \in [0,{\pi} \mbox{-} \varepsilon]$ for any polynomially small $\varepsilon$, assuming the number of time slots in the discretized time horizon is constant. 
Furthermore, if the number of time slots is polynomial (in the input size), we present a reduction to the  unsplittable flow problem on a path that adds only a constant factor to the approximation ratio.  We remark that the unsplittable flow problem considers only real-valued demands which is indeed simpler than our setting. Finally, we present a practical greedy algorithm for the single time slot case with an approximation ratio of $\tfrac{1}{2}\cos \frac{\phi}{2}$ and a running time of ${O}(N\log N)$, where $N$ is the total number of complex-valued user demands, which can be implemented in real world power systems.

The paper is structured as follows. In Sec.~\ref{sec:related}, we briefly present the related works. In Sec.~\ref{sec:model}, we provide the problem
definitions and notations needed. Then we present algorithms for the case of a  constant number of time slots in Sec.~\ref{sec:const}, namely, a PTAS for $\phi \in [0,\tfrac{\pi}{2}]$ and an FPTAS for $\phi \in [0,\pi \mbox{-} \varepsilon]$. In Sec.~\ref{sec:poly} we present the reduction to the unsplittable flow problem for the case of a polynomial number of time slots. The proposed greedy algorithm is provided in Sec.~\ref{sec:greedy}. In Sec.~\ref{sec:reduction}, we show how to include elastic demands, i.e., demands that can be partially satisfied,  along with the inelastic ones in the problem formulation. Lastly, Sec.~\ref{sec:concl} concludes this article. 
\section{Related work}\label{sec:related}

Several recent studies consider resource allocation with inelastic demands (that is, when the decision variables are all binary).
For a single time slot case, the problem studied here resembles the complex-demand knapsack problem {\sc (CKP)}~\cite{YC13CKP}. Let  $\phi$ be the maximum angle between any pair of complex-valued demands and $N$ be the total number of these demands. A $\frac{1}{2}$-approximation was obtained \cite{YC13CKP} for the case where $0 \le \phi \le \frac{\pi}{2}$.  On the other hand, it was shown in \cite{woeginger2000does} (also \cite{YC13CKP}) that no fully polynomial-time approximation scheme (FPTAS) exists. Recently, a  polynomial-time approximation scheme (PTAS), and a bi-criteria FPTAS (allowing constraint violation) for $\frac{\pi}{2} < \phi < \pi - \varepsilon$ were obtained in \cite{CKM14, CKM15}. This essentially closes the approximation gap as it is shown in \cite{KCE14} that when $\phi \in (\tfrac{\pi}{2}, \pi]$, there is no $\alpha$-approximation to {\sc CKP} for any $\alpha$ with polynomial number of bits, unless P=NP. Additionally, when $\varepsilon$ is arbitrarily close to zero (i.e., $\phi \to \pi$) there is no $(\alpha,\beta)$-approximation in general for any $\alpha,\beta$ with polynomial number of bits, unless P=NP. Therefore, the PTAS and the bi-criteria FPTAS~\cite{CKM14} are the best approximation possible for {\sc CKP}. 
In \cite{KT15}, an extension of {\sc CKP} was provided to handle a constant number of  quadratic (and linear) constraints. A fast greedy algorithm was given in \cite{KKCMZ16} for solving {\sc CKP} with a constant approximation ratio that runs in $O(N\log N)$ time. 
A recent work \cite{MCK16} extends the greedy algorithm to solve the optimal power flow problem (OPF) with inelastic demands,  a generalization of {\sc CKP} to a networked setting including voltage constraints.

When the demands are real-valued, the problem under study (considering multiple time slots)  is related to the unsplittable flow problem on a path ({\sc UFP}). In {\sc UFP}, each demand is associated with a unique path from a source to a sink.
{\sc UFP} is strongly NP-hard~\cite{darmann2010resource}.
A Quasi-PTAS was obtained by Bansal et al.~\cite{BNC07}.
Anagnostopoulos et al.~\cite{anagnostopoulos2014mazing} obtained a $1/(2+\epsilon)$-approximation (where $\epsilon>0$ is a constant). This matched the previously known approximation under the {\it no bottleneck assumption (NBA)} \cite{chekuri2007multicommodity}, which is the case when the largest demand is at most the smallest capacity.
The {\sc UFP} with bag constraints (\mbox{\sc bag-UFP}) is the generalization of {\sc UFP} where each user has a set of demands among which at most one is selected~\cite{chakaravarthy2010varying}. This problem is APX-hard even in the case of unit demands and capacities~\cite{spieksma1999approximability}. Under the NBA assumption, a $\tfrac{1}{65}$-approximation  was obtained  in ~\cite{elbassioni2012approximation}, which was later improved by \cite{chakaravarthy2010varying} to $\tfrac{1}{17}$. More recently, an $O(\log N/\log \log N)^{-1}$-approximation without NBA was obtained in \cite{Grandoni2015}. A constant factor approximation to {\sc bag-UFP} remains an interesting open question.

In this paper, we extend the complex-demand knapsack problem over a discretized time horizon, where each time slot is associated with a fixed supply limit. A user provides multiple demand preferences with their respective time window from which at most one is selected. When the number of time slots is constant, the problem generalizes {\sc CKP} (see,~\cite{CKM14}) to multiple time slots, and also extends that of~\cite{KT15} by considering multiple demands per user, thereby adding $n$ extra constraints, where $n$ is the number of users. 
Furthermore, for the case of a polynomial number of time slots, our problem is a generalization of the unsplittable flow problem on paths to accommodate complex-valued demands. Finally, we extend the greedy algorithm in~\cite{KKCMZ16} (for the single time slot case) to handle multiple demands per user keeping the same approximation ratio and running time.

\section{Problem Definitions and Notations}\label{sec:model}
In this section we formally define the complex-demand scheduling  problem.
Throughout this paper, we sometimes denote $\nu^{\rm R} \triangleq {\rm Re}(\nu)$ as the real part and $\nu^{\rm I} \triangleq {\rm Im}(\nu)$ as the imaginary part of a given complex number $\nu\in \CC$. We use $|\nu|$ to denote the magnitude of $\nu$ and ${\rm arg}(\nu)$ to denote the angle $\nu$ makes with the positive real axis. Unless stated otherwise, we denote $\mu_t$ (and sometimes $\mu(t)$ whenever we use subscripts for other purposes) as the $t$-th component of the vector  $\mu$.
\subsection{Complex-demand Scheduling  Problem }

Consider a discrete time horizon denoted by $\cT\triangleq\{1,...,m\}$. At each time slot $t\in \cT$, the generation capacity of the power grid is denoted by $C_t\in \RR_+$. Denote by $\cN\triangleq\{1,...,n\}$ the set of all users with cardinality $n \triangleq |\cN|$. Each user $k\in \cN$ declares a set of demand preferences indexed by the set $D_k$. Each demand $j \in D_k$ is defined over a time interval $T_{j} \subseteq \cT$, that is, $T_j = \{t_1,t_1+1,...,t_2\}$ where $t_1,t_2\in \cT$ and $t_1\le t_2$.  Demand $j$ is also associated with a set of complex numbers $\{s_{k,j}(t)\}_{ t\in T_{j}}$ where $s_{k,j}(t) \triangleq s_{k,j}^{\rm R}(t) + {\bf i} s_{k,j}^{\rm I}(t) \in \CC$ is a complex power demand at time $t$. A positive utility $u_{k,j}$ is associated with each user demand $(k,j)$ if  satisfied.

The goal is to find a solution of control variables $(x_{k,j})_{k \in \cN, j\in D_k} \in \{0,1\}^{\sum_{i \in \cN}|D_i|}$ that maximizes the total utility of satisfied users subject to the generation capacity over time. 
More formally, we define the complex-demand scheduling problem over $m$ discrete time slots (\textsc{$m$-CSP}) by the following integer programming problem.
\begin{align}
\textsc{($m$-CSP)} \quad& 
\displaystyle \max  \sum_{k\in \cN } \sum_{j \in D_k} u_{k,j} x_{k,j} \\
\text{subject to }  \quad  &  \displaystyle\Big|\sum_{k\in \cN} \sum_{j\in D_k: T_{j} \ni t} s_{k,j}(t)\cdot x_{k,j}\Big| \le C_t, \qquad \text{ for all } t \in \cT \label{mc1}\\
&\sum_{j \in D_k} x_{k,j} \le 1, \qquad \text{ for all } k \in \cN \label{mc2}\\
& x_{k,j}\in\{0,1\},\qquad \text{ for all }(k,j)\in\cI, \label{mc3} 
\end{align}
where $\cI=\{(k,j):~k\in\cN~\text{ and }~j\in D_k\}$.
Cons.~\raf{mc1} captures the capacity limit, and Cons.~\raf{mc2} forces at most one demand for every user to be selected. Note that \raf{mc1}  is equivalent to a quadratic constraint 
{\small
$$
\left(\sum_{k\in \cN} \sum_{j\in D_k: T_{j} \ni t} \re(s_{k,j}(t))\cdot x_{k,j}\right)^2+\left(\sum_{k\in \cN} \sum_{j\in D_k: T_{j} \ni t} \im(s_{k,j}(t))\cdot x_{k,j}\right)^2 \le C_t^2, ~ \forall  t \in \cT.
$$}
We consider the following assumption that are mainly needed in Sec.~\ref{sec:bfptas}: for any user $k$, \vspace{-5pt}
\begin{itemize}
	\item all demands $s_{k,j}(t)$, for $j \in D_k$ and $t\in T_j$, reside in the same quadrant of the complex plane.
\end{itemize}
We also assume without loss of generality that $u_{k,j}>0$ and $|s_{k,j}(t)|\le C_t$ for all $(k,j)\in \cI$ and $t\in \cT$. 
Problem {\sc $1$-CSP} (i.e., $|\cT|=1$) is called the complex-demand knapsack, denoted by {\sc CKP}. Evidently, {\sc $m$-CSP} is NP-complete, since the knapsack problem is its special case when we set all $s_{k,j}^{\rm I}(1) = 0$, $\cT = \{1\}$, and $|D_k|=1$. We will write {\sc $m$-CSP}$[\phi_1,\phi_2]$ for the restriction of problem {\sc $m$-CSP} subject to $\phi_1 \le \max_{k \in {\cal N}}{\rm arg}(s_{k,j}(t)) \le \phi_2$, where we assume ${\rm arg}(s_{k,j}(t))\ge 0$ for all $(k,j)\in\cI$, $t\in T_j$. 

\subsection{Approximation Algorithms}
Given a solution $x \triangleq (x_{k,j})_{k \in \cN, j \in D_k}$, denote the total utility by $u(x)\triangleq \sum_{k \in \cN}$ $\sum_{j \in D_k} u_{k,j} x_{k,j}$. We denote an optimal solution to \textsc{$m$-CSP} by $x^\ast$ and $\OPT \triangleq u(x^\ast)$. With a slight abuse of notation, for a given subset $S\subseteq \cN$, we write $u(S)\triangleq \sum_{k\in S}\sum_{j\in D_k} u_{k,j}$.

\begin{definition}
	For $\alpha\in(0,1]$ and $\beta\ge1$, we define a bi-criteria $(\alpha,\beta)$-approximation to \textsc{$m$-CSP} as a solution $\hat x =(\hat{x}_{k,j})_{(k,j) \in \cI} \in \{0, 1\}^{|\cI|}$ satisfying Cons.~\raf{mc2}-
	\raf{mc3}, and
	\begin{align}
	&   \displaystyle\Big|\sum_{k\in \cN} \sum_{j\in D_k: T_{j} \ni t } s_{k,j}(t) \hat x_{k,j}\Big| \le \beta \cdot  C_t \qquad \text{ for all } t \in \cT \label{eq:betamc1}
	\end{align}
	such that $u(\hat x) \ge \alpha \OPT$.
\end{definition} 
In the above definition, $\alpha$ characterizes the approximation ratio between an approximate solution and the optimal solution, whereas $\beta$ characterizes the violation bound of constraints.
In particular, {\em polynomial-time approximation scheme} (PTAS) is a $(1-\epsilon,1)$- approximation algorithm for any $\epsilon>0$.  The running time of a PTAS is polynomial in the input size for every fixed $\epsilon$, but the exponent of the polynomial might depend on $1/\epsilon$.  
An even stronger notion is a {\em fully polynomial-time approximation scheme} (FPTAS), which requires the running time to be polynomial both in input size and $1/\epsilon$. In this paper, we are interested in bi-criteria FPTAS, which is a $(1, 1+\epsilon)$-approximation algorithm for any $\epsilon>0$, with the running time to be polynomial in the input size and  $1/\epsilon$.  When $\beta=1$, we sometimes  call an $(\alpha,\beta)$-approximation  an ${\alpha}$-approximation.



 
\section{{$m$-CSP} with a Constant Number of Time Slots} \label{sec:const}
In this section we assume the number of time slots $|\cT|$ is a constant.
This assumption is practical in the  realistic setting, where users declare their demands on hourly basis one day ahead in the electricity market.
We remark that the results in this and the next section do not require $T_j$ to be a continuous interval in $\cT$.

\subsection{PTAS for {\sc $m$-CSP$[0,\frac{\pi}{2}]$}}
\label{sec:cdrconst}
Define a convex relaxation of {\sc $m$-CSP} (denoted by {\sc rlxCSP}), such that Cons.~\raf{mc3} are replaced by $x_{k,j} \in [0,1]$ for all $(k,j)\in \cI$.
We define another convex relaxation that will be used in the PTAS denoted by {\sc rlxCSP$[S_1,S_0]$} which is equivalent to {\sc rlxCSP}, subject to partial substitution such that $x_{k,j} = 1$, for all ${(k,j)} \in S_1$ and $x_{k,j} = 0,$ for all $(k,j)\in S_0$, where $S_1,S_0\subseteq\cI$ such that $S_1 \cap  S_0 = \varnothing$:
\begin{align}
&\textsc{(rlxCSP}[S_1, S_0]\text{)} \qquad \displaystyle  \max_{x_{k,j} \in [0,1]} \,  \sum_{k\in \cN }\sum_{j \in D_k} u_{k,j} x_{k,j}, ~~\text{s.t.}  \\
& \displaystyle \Big(\sum_{k \in \cN}\sum_{j\in D_k: t \in T_{j}}s_{k,j}^{\rm R}(t)\cdot x_{k,j}\Big)^2+ \Big(\sum_{k\in \cN}\sum_{j \in D_k: t \in T_{j} }s_{k,j}^{\rm I}(t)\cdot x_{k,j}\Big)^2\le C_t^2, ~\forall t\in\cT \label{CV1}\\
&\sum_{j \in D_k} x_{k,j} \le 1, \qquad \text{ for all } k \in \cN \\
&x_{k,j} = 1, \qquad \text{ for all } (k,j) \in S_1\\
&x_{k,j} = 0\qquad \text{ for all } (k,j) \in S_0.
\end{align}

The above relaxation can be solved approximately in polynomial time using standard convex programming algorithms (see, e.g., \cite{nemirovski2008interior}). In fact, such algorithms can find a feasible solution $\xcp$ to the convex relaxation such that $u(\xcp) \ge \OPT^* - \delta$, in time polynomial in the input size (including the bit complexity) and $\log\tfrac{1}{\delta}$, where $\OPT^*$ is the optimal objective value of {\sc rlxCSP$[S_1, S_0]$}. Notice that the value of an optimal solution to $\textsc{rlxCSP}[S_1, S_0]$ problem is no worse than that of {\sc $m$-CSP} since the feasibility region of the latter is a subset of that of the former. This, in turn, implies that $\OPT^* \ge \OPT\ge \bar u \triangleq \max_{k,j} u_{k,j}$, and hence setting $\delta$ to $\tfrac{\epsilon}{2} \cdot \bar u $ assures that $u(\xcp) \ge (1-\frac{\epsilon}{2})\cdot \OPT^*$.

 We provide a $(1-\epsilon,1)$-approximation  for  \textsc{$m$-CSP$[0,\frac{\pi}{2}]$} in Algorithm~\ref{alg:mcptas}, denoted by  {\sc $m$-CSP-PTAS}. The idea of   {\sc $m$-CSP-PTAS}  is based on that proposed in \cite{KT15,EN17} with two extensions. First, we consider multiple demands per user. This in fact  adds $n$ extra  constraints to that in \cite{KT15,EN17}, and thus the rounding procedure requires further analysis. The second extension is the addition of  elastic demands $\cF$. We remark that \cite{CKM14} considers multiple inelastic demands per user for the single time slot case (denoted by {\sc CKP}); however, their algorithm is based on a completely different geometric approach that is more complicated than that in~\cite{KT15}.

Given a feasible solution $x^*$ to {\sc rlxCSP$[S_1,S_0]$}, a restricted set of demands $S\subseteq\cI\cup \cF$, and vectors $c^1,c^2 \in \RR_+^m$, we define the following relaxation, denoted by {\sc LP}$[c^1,c^2, x^*, S]$:
\begin{align}
\textsc{(LP$[c^1,c^2, x^*, S]$)}&\quad  \displaystyle  \max_{x_{k,j} \in [0,1]} \, \sum_{k\in \cN}\sum_{j \in D_k} u_{k, j} x_{k,j}\\
\text{ s.t }\quad & \displaystyle  \sum_{k\in \cN}\sum_{j \in D_k: t \in T_{j}} s^{\rm R}_{k,j}(t)\cdot x_{k,j} \le c^1_t,\quad \text{for all } t \in \cT \label{md1}\\
& \displaystyle \sum_{k\in \cN}\sum_{j \in D_k: t \in T_{j}} s^{\rm I}_{k,j}(t)\cdot x_{k,j}\le c^2_t,\quad \text{for all } t \in \cT\label{md2}\\
&\sum_{j \in D_k} x_{k,j} \le 1, \qquad \text{ for all } k \in \cN \label{md3} \\
&x_{k,j} = x^*_{k,j} \qquad \text{ for all } (k,j) \in S\label{md4}. 
\end{align}

 The Algorithm~\ref{alg:mcptas} proceeds as follows. We guess $S_1 \subseteq \cI$ to be the set  of largest-utility $\frac{8m}{\epsilon}$ inelastic demands in the optimal solution; this defines an excluded set of demands $S_0 \subseteq\cI\setminus S_1$ whose utilities exceed one of the utilities in $S_1$ (Step~\ref{alg:mcv}). For each such $S_1$ and $S_0$, we solve the convex program {\sc rlxCSP$[S_1,S_0]$} and obtain a $(1-\tfrac{\epsilon}{2})$-approximation $\xcp$ (note that the feasibility of the convex program is guaranteed by the conditions in Step~\ref{alg:mc-feas}). The real and imaginary projections over all time slots of solution $\xcp$, denoted by $L^{\rm R} \in \RR^m_+$ and $L^{\rm I} \in \RR^m_+$, are used to define the linear program {\sc LP}$[L^{\rm R}, L^{\rm I}, \xcp, S_1 \cup S_0]$ over the restricted set of demands $S_1 \cup S_0$.
  We  solve  the linear program in Step \ref{alg:mclp}, and then round down the solution corresponding to demands $(k,j) \in \cI$ in Step \ref{alg:mc-round}.  Finally, we return a solution $\hat x$ that attains maximum utility among all obtained solutions.

\begin{algorithm}[!htb]
	\caption{ {\sc $m$-CSP-PTAS}$[\{u_{k,j}, \{s_{k,j}(t)\}_{t\in T_{j}}\}_{k\in\cN, j\in D_k},(C_t)_{t\in \cT},\epsilon]$} \label{alg:mcptas}
	\begin{algorithmic}[1]
	\Require Users' utilities and demands $\{u_{k,j},\{s_{k,j}(t)\}_{t\in T_{j}}\}_{k\in\cN, j\in D_k}$; capacity over time $C_t$; accuracy parameter $\epsilon$
		\Ensure $(1-\epsilon,1)$-solution $\hat{x}$ to \textsc{$m$-CSP$[0,\frac{\pi}{2}]$} 
		\State $\hat x \leftarrow { \bf 0}$
		\For {each set  $S_1\subseteq \cI$ such that $|S_1| \le \frac{8m}{\epsilon}$ } \label{alg:guess}
		\If{$\Big|\displaystyle\sum_{(k,j)\in S_1: t \in T_{j}}s_{k,j}(t)\Big|\le C_t$ for all $t\in \cT$ and $\displaystyle |\{j:(k,j)\in S_1\}| \le 1, \text{ for all } k \in \cN $} \label{alg:mc-feas}
		\State $S_0\leftarrow\{(k,j)\in\cI\setminus S_1 \mid u_{k,j}>\min_{(k',j')\in S_1}u_{k',j'}\}$ \label{alg:mcv}
		\State $\xcp \leftarrow$ Solution of {\sc rlxCSP$[S_1,S_0]$}\label{alg:cvx} \Comment{Obtain a $(1-\tfrac{\epsilon}{2})$-approximation }
		\For {all $t\in \cT$ }
		\State  $\displaystyle L_t^{\rm R} \leftarrow \sum_{k\in \cN}\sum_{j \in D_k: t \in T_{j}} s_{k,j}^{\rm R}(t)\cdot \xcp_{k,j} $; $\displaystyle L_t^{\rm I} \leftarrow \sum_{k\in \cN}\sum_{j \in D_k: t \in T_{j}} s_{k,j}^{\rm I}(t)\cdot\xcp_{k,j} $  
		\EndFor
		\State $\xlp \leftarrow$ Solution of {\sc LP}$[L^{\rm R}, L^{\rm I}, \xcp,S_1 \cup S_0]$ \label{alg:mclp}
		 \Statex \Comment{Round the LP solution} 
		 \State \mbox{$\bar x \leftarrow \{(\bar x_{k,j})_{k\in\cN,j\in D_k} \mid \bar x_{k,j} = \lfloor\xlp_{k,j}\rfloor \text{ for } (k,j) \in \cI\}$}\label{alg:mc-round}
		 \If{$u(\bar x) > u(\hat x)$ }
		 \State $\hat x \leftarrow \bar x$
		 \EndIf 
		 \EndIf
		 \EndFor
		\State \Return $\hat x$
	\end{algorithmic}
\end{algorithm}
\begin{theorem}
	\label{th:mcptas}
	For any fixed $\epsilon$, Algorithm~\ref{alg:mcptas}  obtains a $(1- \epsilon,1)$-approximation in polynomial time. 
\end{theorem}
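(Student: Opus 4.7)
The plan is to focus on the single iteration in which the guessed set $S_1$ coincides with the $\tfrac{8m}{\epsilon}$ highest-utility inelastic demands of an optimal solution $x^*$ (with ties broken consistently). For this iteration, $x^*$ itself is feasible for \textsc{rlxCSP}$[S_1,S_0]$, so the optimum $\OPT^*$ of the relaxation satisfies $\OPT^*\ge \OPT$. Running the convex solver with precision $\delta=\tfrac{\epsilon}{2}\bar u$ (where $\bar u=\max_{k,j}u_{k,j}\le \OPT$) yields $\xcp$ with $u(\xcp)\ge (1-\tfrac{\epsilon}{2})\OPT$. The algorithm returns the best iteration, so proving the claim for this one suffices.

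Next I would verify that \textsc{LP}$[L^{\rm R},L^{\rm I},\xcp,S_1\cup S_0]$ is a valid relaxation and that its rounded-down solution is feasible for \textsc{$m$-CSP}$[0,\tfrac{\pi}{2}]$. Because angles lie in the first quadrant, $s^{\rm R}_{k,j}(t), s^{\rm I}_{k,j}(t)\ge 0$; therefore any $x$ satisfying \raf{md1} and \raf{md2} obeys $\big|\sum_{k,j} s_{k,j}(t)\,x_{k,j}\big|^2\le (L^{\rm R}_t)^2+(L^{\rm I}_t)^2\le C_t^2$, the last inequality being the convex feasibility of $\xcp$. Non-negativity of coefficients also means that rounding down preserves \raf{md1}--\raf{md2}, so $\bar x$ is feasible for $m$-CSP. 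Moreover $\xcp$ itself is feasible for the LP (the bounds $L^{\rm R},L^{\rm I}$ are attained at $\xcp$ with equality), so $u(\xlp)\ge u(\xcp)$.

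The main step, and the one I expect to be the chief technical obstacle, is controlling the rounding loss through the structure of a basic optimum of the LP. Let $F$ be the number of fractional variables of $\xlp$, let $P$ be the set of users carrying at least one fractional variable, and partition $P=P_1\cup P_2$ according to whether the user has exactly one fractional variable or at least two. The crucial observation is that for $k\in P_1$ the partition constraint \raf{md3} cannot be tight, since a single fractional value cannot equal $1$ minus a sum of integers. Consequently only the partitions in $P_2$ contribute tight partition constraints to the basis, giving $F\le 2m+|P_2|$. Combining this with the counting identity $F\ge |P_1|+2|P_2|$ yields $F\le 4m-|P_1|\le 4m$.

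Finally, every fractional variable lies outside $S_1\cup S_0$, so its utility obeys $u_{k,j}\le \min_{(k',j')\in S_1}u_{k',j'}\le u(S_1)/|S_1|\le \tfrac{\epsilon\,\OPT}{8m}$. Hence the rounding loss is at most $4m\cdot\tfrac{\epsilon\,\OPT}{8m}=\tfrac{\epsilon}{2}\OPT$, and combining with the $\tfrac{\epsilon}{2}\OPT$ slack from the convex solver gives $u(\bar x)\ge (1-\epsilon)\OPT$. Polynomial running time follows because, for fixed $\epsilon$ and constant $m$, the number of guesses of $S_1$ is $O(n^{8m/\epsilon})$, and each iteration executes one convex program and one LP, both polynomial in the input size. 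The delicate piece is the $F\le 4m$ bound, which dictates the guess size $\tfrac{8m}{\epsilon}$; the naive counting only yields $F\le 2m+|P|$, so the singleton-partition observation is essential.
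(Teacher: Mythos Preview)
Your proposal is correct and follows essentially the same route as the paper: guess the top $\tfrac{8m}{\epsilon}$ demands of an optimum, solve the convex relaxation, pass to the LP, and bound the rounding loss via the number of fractional components in a BFS. Your counting argument for the $4m$ bound (splitting users into $P_1$ and $P_2$ and noting that partitions of $P_1$-users cannot be tight) is a slightly different but equivalent organization of the paper's Lemma~\ref{lem:2m}; the only omission is the trivial case $|S^*|\le \tfrac{8m}{\epsilon}$, where the iteration $S_1=S^*$ already fixes all optimal demands to~$1$ and hence $u(\bar x)\ge \OPT$ immediately.
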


We remark  that a PTAS is the best approximation one can hope for, since it is shown in \cite{YC13CKP,woeginger2000does} that it is NP-Hard to obtain an FPTAS for the single time slot version  ({\sc $1$-CSP$[0,\tfrac{\pi}{2}]$}). 

\begin{proof}
	One can easily see that the running time of Algorithm~\ref{alg:mcptas} is polynomial in size of the input, for any given $\epsilon$. We now argue that the solution $\hat x$ is $(1- \epsilon)$-approximation for {\sc $m$-CSP$[0,\tfrac{\pi}{2}]$}. Let $x^*$ be the optimal solution for {\sc $m$-CSP$[0,\tfrac{\pi}{2}]$} of utility $\OPT \triangleq u(x^*)$. Define  $S^*\triangleq\{(k,j) \in \cI \mid x^*_{k,j} = 1\}$. 
	By the feasibility of $x^*$, in Step~\ref{alg:cvx} the algorithm obtains  
	\begin{equation}
	u(\xcp) \ge (1-\tfrac{\epsilon}{2})\cdot \OPT^* \ge (1-\tfrac{\epsilon}{2})\cdot \OPT, \label{eq:mc-cvx}
	\end{equation}
	where $\OPT^*$ is the optimal value of {\sc rlxCSP$[S_1,S_0]$} for some $S_1$ equal to the highest $\tfrac{8m}{\epsilon}$ utility demands in $S^*$, and $S_0\cap S^* = \varnothing$.
	If $|S^*| \le \tfrac{8m}{\epsilon}$,  then obviously $\hat x = \xlp = \xcp$ and $u(\xcp) \ge  (1-\tfrac{\epsilon}{2}) \OPT $.
	
	Now suppose $|S^*|> \tfrac{8m}{\epsilon}$. Observe that $\xcp$ is feasible for   {\sc LP}$[L^{\rm R}, L^{\rm I}, \xcp, S_1 \cup S_0]$ (Cons.~\raf{md1}-\raf{md4} are tight when $\xcp$ is substituted). Therefore, the optimal solution $\xlp$ of {\sc LP}$[L^{\rm R}, L^{\rm I}, \xcp, S_1 \cup S_0]$ satisfies 
	\begin{align}
	u(\xlp) \ge u(\xcp) \label{eq:mc-lp}.
	\end{align}
	
	By Lemma~\ref{lem:2m} below,  {\sc LP}$[L^{\rm R}, L^{\rm I}, \xcp, S_1 \cup S_0]$ has a basic feasible solution (BFS) with at most $4m$ fractional components, and for any fractional component $(k,j)\in\cI\setminus (S_1\cup S_0)$, $u_{k,j} < \min_{(k',j') \in S_1} u_{k',j'} \le \frac{\sum_{(k',j') \in S_1} u_{k',j'}}{|S_1|}$. Therefore, rounding down $\xlp$ in Step~\ref{alg:mc-round} gives,
	\begin{align*}
	u(\hat x) &\ge u(\xlp) - 4m \frac{\sum_{(k,j)\in S_1} u_{k,j}}{|S_1|} \ge (1-\tfrac{\epsilon}{2}) u(\xlp) \\
	& \ge (1-\tfrac{\epsilon}{2})^2 \cdot \OPT \ge (1-\epsilon)\cdot \OPT,
	\end{align*}
	where the second to last inequalities follow by Eqns.~\raf{eq:mc-cvx}-\raf{eq:mc-lp}. It remains to show that $\hat x$ is feasible. Since $\hat x$ is obtained by rounding down (some) $\xlp$ (Step.~\ref{alg:mc-round}),
{\small
	\begin{align}
	&\Big(\sum_{k\in \cN}\sum_{j\in D_k: t \in T_{j}}s_{k,j}^{\rm R}(t)\cdot \hat x_{k,j}\Big)^2+ \Big(\sum_{k\in \cN}\sum_{j \in D_k: t \in T_{j} }s_{k,j}^{\rm I}(t)\cdot \hat x_{k,j}\Big)^2 \\
	&\le \Big(\sum_{k\in \cN}\sum_{j\in D_k: t \in T_{j}}s_{k,j}^{\rm R}(t)\cdot \xlp_{k,j}\Big)^2+ \Big(\sum_{k\in \cN}\sum_{j\in D_k: t \in T_{j}}s_{k,j}^{\rm I}(t)\cdot \xlp_{k,j}\Big)^2 \notag\\
	&\le (L_t^{\rm R})^2 + (L_t^{\rm I})^2 = \Big(\sum_{k\in \cN}\sum_{j\in D_k: t \in T_{j}} s_{k,j}^{\rm R}(t) \xcp_{k,j}\Big)^2 + \Big(\sum_{k\in \cN}\sum_{j\in D_k: t \in T_{j}} s_{k,j}^{\rm I}(t) \xcp_{k,j}\Big)^2 \le C^2_t \label{eq:mc-feas},
	\end{align}}
	where Eqn.~\raf{eq:mc-feas} follows by the feasibility of $\xlp$ and $\xcp$ respectively. Hence, Cons.~\raf{mc1} are satisfied. Finally, since some components of $\xlp$ in Step~\ref{alg:mc-round} are only rounded down, Cons.~\raf{mc2}-\raf{mc3} are also satisfied. 
\end{proof}

\begin{lemma}[\cite{SR10}]{\label{lem:2m}}
	Let $x$ be a basic feasible solution (BFS) for {\sc LP}$[c^1,c^2, x^*, S]$. Then $x$ has at most $4m$ non-integral components. 
\end{lemma}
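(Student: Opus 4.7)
The plan is to invoke the standard characterization of a basic feasible solution (BFS) — that the tight constraint matrix has rank equal to the number of LP variables — and then exploit the block structure of the bag constraints~\raf{md3}. I would let $F \subseteq \cI$ denote the set of indices $(k,j)$ with $x_{k,j} \in (0,1)$, set $f \triangleq |F|$ and $F_k \triangleq \{j : (k,j) \in F\}$ for each user $k$, and let $v$ be the total number of LP variables.

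The first step would be to account for the non-fractional variables. Each variable outside $F$ is pinned by either a tight box constraint ($x_{k,j}=0$ or $x_{k,j}=1$) or, for $(k,j)\in R$ with integral $x^*_{k,j}$, a tight equality constraint~\raf{md4}. These $v-f$ single-variable rows are linearly independent and can be eliminated from the active-set system by substituting the integer values. What remains is an effective system in which only the projections of the capacity rows~\raf{md1}--\raf{md2} and the bag rows~\raf{md3} onto the coordinates in $F$ appear, and for a BFS the rank of this projected system must be exactly $f$. The capacity rows contribute at most $2m$ independent rows, so the deficit must be supplied by tight bag constraints.

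The crux I would then isolate is the observation that the bag constraint of a user with exactly one fractional variable cannot be tight at the BFS: if $(k,j^*)$ were the unique fractional index of user $k$ and~\raf{md3} held with equality, then
\[
x_{k,j^*} \;=\; 1 - \sum_{j \in D_k \setminus \{j^*\}} x_{k,j} \;\in\; \ZZ,
\]
contradicting $x_{k,j^*}\in(0,1)$. Hence only users in $U^{(2)} \triangleq \{k : |F_k|\ge 2\}$ can supply tight bag rows whose projection onto $F$ is nonzero. Since the sets $F_k$ are pairwise disjoint and each has size at least $2$ for $k\in U^{(2)}$, we get $2|U^{(2)}|\le \sum_{k\in U^{(2)}}|F_k|\le f$, i.e.\ $|U^{(2)}|\le f/2$. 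Combining this with the rank count yields $f \le 2m + |U^{(2)}| \le 2m + f/2$, and therefore $f \le 4m$.

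I expect the main obstacle to be notational rather than conceptual: carefully executing the projection/elimination step in the presence of the equality constraints from $R$, so that one can cleanly identify which tight rows genuinely contribute to the rank restricted to the fractional coordinates. Once that bookkeeping is in place, the structural dichotomy $|F_k|=1 \Rightarrow$ bag not tight is the only idea needed, and the final inequality $f \le 4m$ follows by pure arithmetic.
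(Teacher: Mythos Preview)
Your argument is correct and rests on the same idea as the paper's proof: a BFS rank count combined with the observation that a tight bag constraint involving only one non-integral variable would force that variable to be integral. The paper organizes the bookkeeping slightly differently---it bounds the number of \emph{positive} components by $2m+h$ (with $h$ the number of tight bag constraints) and then shows that at least $\max\{h-2m,0\}$ of them must equal~$1$---whereas you work directly on the fractional support and use the halving inequality $|U^{(2)}|\le f/2$; both routes arrive at $f\le 4m$.
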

\begin{remark}
	The above proof shows that we do not need actually to solve {\sc LP}$[L^{\rm R}, $ $L^{\rm I}, \xcp, S_1 \cup S_0]$; starting from $\xcp$, we only need to get a BFS with the same (or better) objective value, which can be reduced to solving systems of linear equations. 
\end{remark}
\begin{proof}
	Let $h$ be the number of users $k$ such that $\sum_{j\in D_k}x_{k,j}=1$.  
	By the properties of a BFS (see, e.g., \cite{GLS88,S86}), the number of strictly positive components in $x$ is at most $2m+h$. Furthermore, constraints \raf{md3} impose that for each $k\in\cN$ among those $h$ users, there is a $j\in D_k$ such that $x_{k,j}>0$. The remaining $2m$ positive variables can belong to at most $2m$ of the constraints \raf{md3}, implying that at least $\max\{h-2m,0\}$ variables are set to $1$. It follows that the total number variables taking non-integral values is at most $2m+h-\max\{h-2m,0\}\le 4m$.  
\end{proof}

\subsection{Bi-criteria FPTAS for \textsc{$m$-CSP$[0,\pi\mbox{-}\varepsilon]$}} \label{sec:bfptas} 
In the previous section, we have restricted our attention to the setting where all demands lie in the positive quadrant of the complex plane (i.e., \textsc{$m$-CSP$[0,\frac{\pi}{2}]$}). In this section, we extend this setting to the second quadrant (\textsc{$m$-CSP$[0,\pi\mbox{-}\varepsilon]$}) for any arbitrary small constant $\varepsilon > 0$, that is, we assume ${\rm arg}(s_{k,j}(t)) \le \pi-\varepsilon$ for all $k \in \cN, j \in D_k, t \in T_j$. 
It is shown in \cite{KCE14} that for \textsc{$1$-CSP$[0, \pi]$}  (the case $|\cT|=1$) there is no $(\alpha, 1)$-approximation for \textsc{$1$-CSP$[0,\pi \mbox{-} \varepsilon]$} unless P=NP. Therefore, a bi-criteria $(1,1+\epsilon)$ is the best approximation one can hope for. Additionally, it is shown that if $\varepsilon$ is arbitrarily close to zero, then there is no $(\alpha,\beta)$-approximation in general for any $\alpha,\beta$ with polynomial number of bits, unless P=NP. Thus, one should expect the running time of $(1,1+\epsilon)$ to depend on the maximum angle $\phi\triangleq\max_{k\in \cN,j\in D_k, t\in T_j}{\rm arg}(s_{k,j}(t))$. We present below such an algorithm, which is an extension of that presented by \cite{CKM14} for multiple time slots.

For convenience, we let $\theta = \max\{\phi - \frac{\pi}{2},0\}$
(see Fig.~\ref{fig:fptas} for an illustration).
We present a  $(1,1+\epsilon)$-approximation  for \textsc{$m$-CSP$[0,\pi\mbox{-}\varepsilon]$} in Algorithm~\ref{CSP-biFPTAS}, denoted by  \mbox{\sc $m$-CSP-biFPTAS}, with running time polynomial in both $\frac{1}{\epsilon}$ and $n$ (i.e., FPTAS). We assume that $\tan \theta$ is bounded  by a polynomial in $n$; as mentioned above, without this assumption, a bi-criteria FPTAS is unlikely to exist (see \cite{KCE14}). 
\begin{figure}[!htb]
	\begin{center}
		\includegraphics[scale=.7]{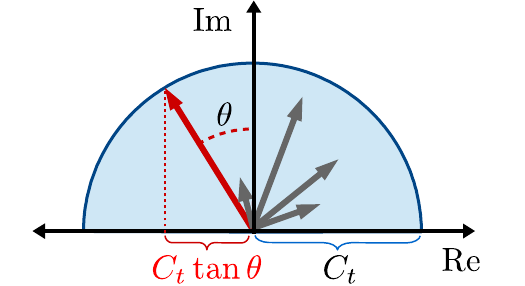}
	\end{center}\vspace{-10pt}
	\caption{ We measure $\theta = \phi - \frac{\pi}{2}$ from the imaginary axis.}
	\label{fig:fptas}
\end{figure} 

Let $\cN_+\triangleq \{k \in \cN \mid s^{\rm R}_{k,j}(t) \ge 0, \forall j\in D_k,~t\in T_j\}$ and $\cN_-\triangleq\{k \in \cN \mid s^{\rm R}_{k,j}(t) < 0,\forall j\in D_k,~t\in T_j\}$ be the subsets of users with demands in the first and second quadrants, respectively. Note that $\cN_+$ and $\cN_-$ partition the set of users $\cN$ by the assumption stated in Sec.~\ref{sec:model}.
 Consider any  solution $\widehat x$ to \textsc{$m$-CSP$[0,\pi\mbox{-}\varepsilon]$}.
The basic idea of Algorithm {\sc $m$-CSP-biFPTAS} is to enumerate the guessed total projections on real and imaginary axes of all time slots for $\sum_{k \in \cN_+} \sum_{j \in D_k: t \in T_j} \widehat x_{k,j}  s_{k,j}(t)$ and $\sum_{k \in \cN_-} \sum_{j \in D_k: t \in T_j} \widehat x_{k,j}  s_{k,j}(t)$, respectively. 
We can use $\tan \theta$ to upper bound the total projections for any feasible solution $\widehat x$ (see Fig.~\ref{fig:fptas} for a pictorial illustration) as follows, for all $t$:
\begin{gather}
\sum_{k \in \cN} \sum_{j\in D_k: t \in T_j} s^{\rm I}_{k,j}(t)\cdot \widehat x_{k,j} \le C_t, \quad
\sum_{k \in \cN_- }\sum_{j \in D_k: t \in T_j} - s^{\rm R}_{k,j}(t)\cdot \widehat x_{k,j}    \le  C_t \tan \theta, \notag\\ 
\sum_{k \in \cN_+}\sum_{j \in D_k: t \in T_j}  s^{\rm R}_{k,j}(t) \cdot \widehat x_{k,j}   \le C_t(1+ \tan \theta). \label{eq:ubounds}
\end{gather}

We then solve two separate multi-dimensional knapsack problems of dimension $2m$  (denoted by {\sc $2m$DKP}), to find subsets of demands that satisfy the individual guessed total projections. But since {\sc $2m$DKP} is generally NP-hard, we need to round-up the demands to get a problem that can be solved efficiently by dynamic programming. We show that the violation of the optimal solution to the rounded problem w.r.t. to the original problem is small in $\epsilon$. 

Next, we describe the rounding in detail. 
 First, define
$L_t \triangleq \frac{\epsilon C_t}{n (\tan \theta+1)}, \text{ for all }$ $ t \in \cT$
 such that the new rounded demands $\widehat s_{k,j}(t)$ are defined by:
\begin{equation}\hspace{-5pt}
\widehat s_{k,j}(t) =
\widehat s^{\rm R}_{k,j}(t) + {\bf i} \widehat s^{\rm I}_{k,j}(t) \triangleq 
\left\{\begin{array}{ll}
\left\lceil \frac{s^{\rm R}_{k,j}(t)}{L_t} \right\rceil \cdot L_t + {\bf i} \left\lceil \frac{s^{\rm I}_{k,j}(t)}{L_t} \right\rceil \cdot L_t, &\text{ if }s^{\rm R}_{k,j}(t)\ge 0,\\[3mm]
\left\lfloor \frac{s^{\rm R}_{k,j}(t)}{L_t} \right\rfloor \cdot L_t + {\bf i} \left\lceil \frac{s^{\rm I}_{k,j}(t)}{L_t} \right\rceil \cdot L_t, & \text{ otherwise. } 
\end{array}\right.
\label{eq:truc}
\end{equation}
 
For convenience, we assume that $s_{k,j}(t)=0$ if $t\in\cT\setminus T_j$. Let $\xi_+ \in \RR^m_+$ (and $\xi_- \in \RR^m_+$), $\zeta_+ \in \RR^m_+$ (and $\zeta_- \in \RR^m_+$) be respectively the guessed real and imaginary absolute total rounded projections of an optimal solution. Then, the possible values of $\xi_+, \xi_-, \zeta_+$ and $\zeta_-$ in each component $t\in\cT$ are integer mutiples of $L_t$:
\begin{align}
 \xi_+(t) \in {\cal A}_+(t)&\triangleq \left\{0, L_t, 2 L_t,\ldots,\left\lceil \frac{C_t (1 + \tan \theta )}{L_t} \right\rceil \cdot L_t\right\}, \notag\\ 
 \xi_-(t) \in {\cal A}_-(t)&\triangleq \left\{0,L_t, 2L_t,\ldots, \left\lceil \frac{C_t \cdot \tan \theta }{L_t} \right\rceil\cdot L_t \right\},\notag\\
 \zeta_+(t),\zeta_-(t)  \in {\cal B}(t)  &\triangleq \left\{0,  L_t, 2L_t,\ldots,\left\lceil \frac{C_t}{L_t} \right\rceil\cdot L_t\right\}.
\label{eq:grid}
\end{align}

The next step is to solve the rounded instance exactly. Assume an arbitrary order on $\cN = \{ 1, ..., n\}$. We use recursion to define a table, with each entry ${U}(k,c^1,c^2)$, $c^1,c^2\in \RR^{m}_+$, as the maximum utility obtained from a subset of users $\{1,2,\dots,K\} \subseteq \cN$ with demands $\{\widehat s_{k,j}(t)\}_{k\in\{1,...,K\}, j \in D_k, t \in T_j}$  that can fit exactly (i.e., satisfy the capacity constraints with equality) within capacities $\{c^1_t\}_{t=1,...,m}$ on the real axis and $\{c^2_t\}_{t= 1,...,m}$ on the imaginary axis. 
We denote by {\sc $2m$DKP-Exact}$[\cdot]$ the algorithm for solving exactly the rounded {\sc $2m$DKP} by  dynamic programming. We provide the detailed description of {\sc $2m$DKP-Exact}$[\cdot]$ in Algorithm~\ref{alg:dyn}.


\begin{algorithm}[!htb]
	\caption{{\sc $m$-CSP-biFPTAS}$[\{u_{k,j}, \{s_{k,j}(t)\}_{t\in T_{j}}\}_{k\in\cN, j\in D_k},(C_t)_{t\in \cT},\epsilon]$}\label{CSP-biFPTAS}
	\begin{algorithmic}[1]
		\Require Users' utilities and demands  $\{u_{k,j},\{s_{k,j}(t)\}_{t\in T_{j}}\}_{k\in\cN, j\in D_k}$; capacity over time $C_t$; accuracy parameter $\epsilon$
		\Ensure $(1,1+4\epsilon)$-solution $\widehat{x}$ to \textsc{$m$-CSP$[0,\pi\mbox{-}\varepsilon]$}
		\State $\widehat{x} \leftarrow \bzero$
		\ForAll {$s_{k,j}(t)$, $k \in \cN$, $j\in D_k$, and $t\in T_j$}
		\State Set $\widehat s_{k,j}(t) \leftarrow \widehat s_{k,j}^{\rm R}(t) + {\bf i} \widehat s_{k,j}^{\rm I}(t) $ as defined by \raf{eq:truc}
		\EndFor
		\ForAll {$\xi_+ \in  \prod_{t\in \cT} {\cal A}_+(t), \xi_- \in  \prod_{t\in \cT}{\cal A}_-(t), \zeta_+, \zeta_- \in \prod_{t\in \cT}{\cal B}(t)$}
		\If {$\big(\xi_+(t) - \xi_-(t)\big)^2 + \big(\zeta_+(t) + \zeta_-(t)\big)^2 \le (1+2\epsilon)^2C_t^2$ for all $t\in\cT$}\label{cond1}
		\State {\small $  y_+ \leftarrow \text{\sc $2m$DKP-Exact}\Big(\{u_{k,j},(\widehat s_{k,j}(t)/L_t)_t\}_{k\in\cN_+, j \in D_k}, \big(\xi_+(t)/L_t\big)_t,\big(\zeta_+(t)/L_t\big)_t\Big)$}
		\State {\small $  y_- \leftarrow \text{\sc $2m$DKP-Exact}\Big(\{u_{k,j},(-\widehat s_{k,j}(t)/L_t)_t\}_{k\in \cN_-, j \in D_k}, \big(\xi_-(t)/L_t\big)_t,\big(\zeta_-(t)/L_t\big)_t \Big)$ }
		\If{$u(y_+ + y_-) > u (\widehat{x})$}
		\State $\widehat{x} \leftarrow  y_+ + y_-$
		\EndIf 
		\EndIf
		\EndFor
		\State \Return $\widehat{x}$
	\end{algorithmic}
\end{algorithm}

\begin{algorithm}[!htb]
	\caption{{\sc $2m$DKP-Exact}$[\{u_{k,j}, \{\widehat s_{k,j}(t)\}_{t\in T_{j}}\}_{k \in \cW, j \in D_k},(c^1_t)_{t\in \cT},(c^2_t)_{t\in \cT}]  $} 
	
	\label{alg:dyn}
	\begin{algorithmic}[1] 
		\Require Utilities, and rounded demands of a restricted set of users $\cW\subseteq \cN$,  $\{u_{k,j},\{s_{k,j}(t)\}_{t\in T_{j}}\}_{k \in \cN, j \in D_k}$; integer capacity  vectors $(c^1_t)_{t\in \cT}, (c^2_t)_{t\in \cT}$
		\Ensure A utility-maximizing optimal binary vector $y$ subject to the capacity constraints defined by $c^1_t,c^2_t$
		
		\State Create a table of size $|\cW| \cdot \prod_t(c^1_t+1)\cdot(c^2_t+1)$, with each entry ${U}(k,c^1, c^2)$ according to: \label{s1}
	{\small 	\begin{align*}
		U(1,c^1,c^2) &\triangleq 
		\displaystyle\max_{j\in D_1 }\{u_{1,j} \mid \widehat s_{1,j}^{\rm R}(t) = c^1_t,~~ \widehat s_{1,j}^{\rm I}(t) = c^2_t, ~\forall t   \} 
\\
		U(k,c^1,c^2) &\triangleq \max 	\left\{ \begin{array}{l}
		\displaystyle\max_{j\in D_k }\Big\{u_{k,j} + U\big(k-1, (c^1_t -  \widehat s_{k,j}(t))_t, (c^2_t -  \widehat s_{k,j}(t))_t\big) \Big\},
		U(k-1,c^1,c^2)\end{array} \right\}\\
		U(k,c^1,c^2) &\triangleq -\infty \text{ for all } c^1,c^2 \not\in\RR^m_+
		\end{align*}}
		\State Compute the corresponding binary vector ${y}(k,c^1, c^2)$ according to the computations in step~\ref{s1}   
		\State \Return $y(|\cW|,c^1,c^2)$.
	\end{algorithmic}
\end{algorithm}

\begin{theorem}\label{thm:bptas}
	Algorithm {\sc $m$-CSP-biFPTAS} is a $(1,1+4\epsilon)$-approximation for \textsc{$m$-CSP$[0,\pi\mbox{-}\varepsilon]$} and its running time is polynomial in both $n$, $\big|\bigcup_{k}D_k\big|$, and $\frac{1}{\epsilon}$.
\end{theorem}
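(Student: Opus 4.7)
The plan is to identify a single iteration of Algorithm~\ref{CSP-biFPTAS} in which the guessed four-tuple $(\xi_+,\xi_-,\zeta_+,\zeta_-)$ equals the real and imaginary projections of an optimal solution $x^*$, split between $\cN_+$ and $\cN_-$ and evaluated on the rounded demands~\raf{eq:truc}. The argument then splits into three pieces: showing that this iteration (i)~passes the test in Step~\ref{cond1}, (ii)~causes the two calls to {\sc $2m$DKP-Exact} to recover a solution of utility at least $u(x^*)=\OPT$, and (iii)~returns $\widehat x$ whose \emph{original} projections have magnitude at most $(1+4\epsilon)C_t$ per time slot.

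For (i) and (ii), define $\widehat\xi_{\pm}(t)=\pm\sum_{k\in\cN_{\pm}}\sum_{j\in D_k:\,t\in T_j} x^*_{k,j}\,\widehat s_{k,j}^{\rm R}(t)$ and $\widehat\zeta_{\pm}(t)=\sum_{k\in\cN_{\pm}}\sum_{j\in D_k:\,t\in T_j} x^*_{k,j}\,\widehat s_{k,j}^{\rm I}(t)$, all nonnegative integer multiples of $L_t$ of polynomial size, hence lying on the enumeration grid~\raf{eq:grid}. Because the rounding in~\raf{eq:truc} alters $|s_{k,j}^{\rm R}(t)|$ and $s_{k,j}^{\rm I}(t)$ by less than $L_t$ per demand and $x^*$ selects at most one demand per user,
\begin{equation*}
\bigl|(\widehat\xi_+-\widehat\xi_-)-(\xi_+^*-\xi_-^*)\bigr|\le nL_t,\qquad 0\le(\widehat\zeta_++\widehat\zeta_-)-(\zeta_+^*+\zeta_-^*)\le nL_t,
\end{equation*}
where $\xi_\pm^*,\zeta_\pm^*$ are the corresponding unrounded projections of $x^*$. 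Combining these with $(\xi_+^*-\xi_-^*)^2+(\zeta_+^*+\zeta_-^*)^2\le C_t^2$ (feasibility of $x^*$) via the triangle inequality in $\RR^2$ gives
\begin{equation*}
\sqrt{(\widehat\xi_+-\widehat\xi_-)^2+(\widehat\zeta_++\widehat\zeta_-)^2}\le C_t+\sqrt{2}\,nL_t=C_t+\tfrac{\sqrt{2}\,\epsilon C_t}{\tan\theta+1}\le(1+2\epsilon)C_t,
\end{equation*}
so the guess passes the test in Step~\ref{cond1}. Moreover, $x^*|_{\cN_+}$ is by construction a feasible integral solution of the rounded $2m$-dimensional knapsack handed to {\sc $2m$DKP-Exact} with right-hand sides $\widehat\xi_+(t)/L_t$ and $\widehat\zeta_+(t)/L_t$, and since the DP is optimal over such exact-fit solutions, $u(y_+)\ge u(x^*|_{\cN_+})$. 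Symmetrically $u(y_-)\ge u(x^*|_{\cN_-})$, whence $u(\widehat x)\ge u(x^*)=\OPT$.

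For (iii), the returned $\widehat x$ satisfies by construction $\sum_{k,j}\widehat x_{k,j}\,\widehat s_{k,j}^{\rm R}(t)=\xi_+(t)-\xi_-(t)$ and $\sum_{k,j}\widehat x_{k,j}\,\widehat s_{k,j}^{\rm I}(t)=\zeta_+(t)+\zeta_-(t)$ for the accepted guess, so its rounded magnitude at $t$ is at most $(1+2\epsilon)C_t$ by Step~\ref{cond1}. Swapping $\widehat s$ for the original $s$ perturbs each axis by at most $nL_t$, and one more application of the triangle inequality gives magnitude at most $(1+2\epsilon)C_t+\sqrt{2}\,nL_t\le(1+(2+\sqrt{2})\epsilon)C_t\le(1+4\epsilon)C_t$, which is Cons.~\raf{eq:betamc1} with $\beta=1+4\epsilon$. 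The running time is the number of enumerated tuples $\prod_t|{\cal A}_+(t)|\,|{\cal A}_-(t)|\,|{\cal B}(t)|^2=O\bigl((n(\tan\theta+1)^2/\epsilon)^{4m}\bigr)$ multiplied by the cost of each DP call $O(n\prod_t(c^1_t+1)(c^2_t+1))$; both are polynomial in $n$ and $1/\epsilon$ since $m$ is constant and $\tan\theta$ is polynomial in $n$. The main delicate step I expect is threading the two $\sqrt{2}\,nL_t$ rounding errors --- one absorbed into the slack in Step~\ref{cond1} and one into the final violation --- through a single factor of $1+4\epsilon$, which is precisely what dictates the choice $L_t=\epsilon C_t/\bigl(n(\tan\theta+1)\bigr)$.
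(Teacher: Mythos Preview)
Your argument is correct and in fact cleaner than the paper's. The paper proves the same three pieces via two auxiliary lemmas (Lemmas~\ref{lem-trunc} and~\ref{lem-trunc2}) that expand the squared magnitudes $(\widehat P_{+,t}-\widehat P_{-,t})^2+\widehat P_{I,t}^2$ and $(P_{+,t}-P_{-,t})^2+P_{I,t}^2$, bound the cross terms using $P_{+,t}+P_{-,t}\le 2(1+\tan\theta)C_t$, and in Lemma~\ref{lem-trunc2} split into two cases depending on whether $\widehat P_{+,t},\widehat P_{-,t}\ge nL_t$. You bypass all of this by bounding the $\ell_2$-distance in $\RR^2$ between the rounded and unrounded projection vectors by $\sqrt{2}\,nL_t$ and applying the triangle inequality twice. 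Both routes land on the same $(1+2\epsilon)$ and $(1+4\epsilon)$ thresholds.

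One small inaccuracy in your closing remark: your own bounds only use $nL_t\le\epsilon C_t$, so the factor $(\tan\theta+1)$ in the denominator of $L_t$ is \emph{not} what lets you thread the two $\sqrt{2}\,nL_t$ errors through $1+4\epsilon$; the simpler choice $L_t=\epsilon C_t/n$ would already suffice for your perturbation argument. The $(\tan\theta+1)$ is genuinely needed in the paper's squared-sum computation (to cancel the $P_{+,t}+P_{-,t}$ term), and in both approaches it keeps the grid $\cA_+(t)$ of polynomial size, but your triangle-inequality route shows it is not forced by the error analysis itself.
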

\begin{proof}
	First, the running time is proportional to the number of guesses, upper bounded by $(\frac{1}{\epsilon} n (\tan \theta+1))^{O(m)}$. 
	For each guess, {\sc $2m$DKP-Exact} constructs a table of size at most $(\frac{1}{\epsilon} n (\tan \theta+1))^{O(m)}$. Since we assumed $\tan \theta$ is polynomial in $n$, the total running time is polynomial in $n$ and $\frac{1}{\epsilon}$, if $m=O(1)$.
	
	To show the approximation ratio of 1, we note that {\sc $m$-CSP-biFPTAS} enumerates over all possible rounded projections subject to the capacity constraints in {\sc $m$-CSP} and that {\sc $2m$DKP-Exact} returns the exact optimal solution for each rounded problem. In particular, by Lemma \ref{lem-trunc} below one of the choices would be the rounded projection for the optimum solution $x^\ast$.   
	It remains to show that the violation of the returned solution is small in $\epsilon$. This is given in Lemma~\ref{lem-trunc2} below, which shows that the solution $\widehat x$ to the rounded problem violates the capacity constraint by only a factor of at most $(1+4\epsilon)$. Both lemmas can be proved in the same way as in \cite{CKM15}; we include the proof below for completeness. 
\end{proof}

For any binary vector $x$ feasible for \textsc{($m$-CSP)}, let us write for brevity  
\begin{gather*}
P_{+,t}(x)\triangleq\sum_{k\in \cN_+}\sum_{j\in D_k:t \in T_{j}} x_{k,j}s_{k,j}^{\rm R}(t), \quad  P_{-,t}(x)\triangleq\sum_{k\in \cN_-}\sum_{j\in D_k:t \in T_{j}}-x_{k,j}s_{k,j}^{\rm R}(t), \\ \text{ and }
P_{I,t}(x)\triangleq\sum_{k\in \cN}\sum_{j\in D_k:t \in T_{j}}x_{k,j}s_{k,j}^{\rm I}(t).
\end{gather*} 
Also, write
\begin{gather*}
\widehat P_{+,t}(x)\triangleq\sum_{k\in \cN_+}\sum_{j\in D_k:t \in T_{j}} x_{k,j} \widehat s_{k,j}^{\rm R}(t), \qquad  \widehat P_{-,t}(x)\triangleq \sum_{k\in \cN_-}\sum_{j\in D_k:t \in T_{j}} -  x_{k,j}\widehat s_{k,j}^{\rm R}(t), \text{ and} \\
\widehat P_{I,t}( x)\triangleq\sum_{k\in \cN}\sum_{j\in D_k:t \in T_{j}}  x_{k,j}\widehat s_{k,j}^{\rm I}(t).
\end{gather*}  
Using the fact that $\ell \le \tau \lceil \frac{\ell}{\tau} \rceil  \le \ell + \tau$ for any $\ell,\tau$ such that $\tau>0$, and that $\sum_{j \in D_k} x_{k,j} \le 1$ by \eqref{mc2}, we have 
\begin{align*}
\widehat P_{+,t}(x) &= \sum_{k\in \cN_+}\sum_{j\in D_k: t \in T_{j} } x_{k,j} \widehat s_{k,j}^{\rm R}(t) \\
&\le \sum_{k\in \cN_+}\sum_{j\in D_k: t \in T_{j} } x_{k,j}  (s_{k,j}^{\rm R}(t)+L_t)= P_{+,t}(x) + nL_t.
\end{align*}
The same bound holds for $\widehat P_{-,t}$ and $\widehat P_{I,t}$:
\small{
\begin{eqnarray}\label{eq:bds}
&\max\{\widehat P_{+,t}(\widehat x)-nL_t,0\}\le P_{+,t}(x)\le \widehat P_{+,t}(\widehat x),~\max\{\widehat P_{-,t}(\widehat x)-nL_t,0\}\le P_{-,t}(x)\le \widehat P_{-,t}(\widehat x), \notag \\ & \max\{\widehat P_{I,t}(\widehat x)-nL_t,0\} \le P_{I,t}(x)\le\widehat P_{I,t}(\widehat x).
\end{eqnarray}}

\begin{lemma}
	\label{lem-trunc}
	For any feasible solution $x$ to \textsc{$m$-CSP $[0,\pi\mbox{-}\varepsilon]$}, we have $$\Big| \sum_{k\in\cN} \sum_{j\in D_k: t \in T_{j} } x_{k,j} \widehat s_{k,j}(t)\Big| \le  ( 1 + 2\epsilon)C_t \qquad \text{for all } t \in \cT.$$
\end{lemma}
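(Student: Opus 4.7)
The plan is to compare the rounded sum $\sum_{k,j}x_{k,j}\widehat s_{k,j}(t)$ with the true sum $\sum_{k,j}x_{k,j}s_{k,j}(t)$ coordinate-by-coordinate in $\RR^2$, and then invoke the triangle inequality in the complex plane.

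First, I would note the decompositions
$$\re\Big(\sum_{k,j}x_{k,j}\widehat s_{k,j}(t)\Big)=\widehat P_{+,t}(x)-\widehat P_{-,t}(x),\qquad \im\Big(\sum_{k,j}x_{k,j}\widehat s_{k,j}(t)\Big)=\widehat P_{I,t}(x),$$
and analogously without the hats for the original demands. The feasibility of $x$ for \textsc{$m$-CSP} then gives $\big(P_{+,t}(x)-P_{-,t}(x)\big)^2+P_{I,t}(x)^2\le C_t^2$, i.e.\ the complex number $v\triangleq\sum_{k,j}x_{k,j}s_{k,j}(t)$ satisfies $|v|\le C_t$.

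Next I would bound the error $\widehat v-v$, where $\widehat v\triangleq\sum_{k,j}x_{k,j}\widehat s_{k,j}(t)$. By the bounds \raf{eq:bds}, each of $\widehat P_{+,t}(x)-P_{+,t}(x)$, $\widehat P_{-,t}(x)-P_{-,t}(x)$, and $\widehat P_{I,t}(x)-P_{I,t}(x)$ lies in $[0,nL_t]$. Consequently
$$\big|\re(\widehat v-v)\big|=\big|(\widehat P_{+,t}-P_{+,t})-(\widehat P_{-,t}-P_{-,t})\big|\le nL_t,\qquad \big|\im(\widehat v-v)\big|\le nL_t,$$
so $|\widehat v-v|\le\sqrt{2}\,nL_t$. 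The triangle inequality then yields $|\widehat v|\le|v|+|\widehat v-v|\le C_t+\sqrt{2}\,nL_t$.

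Finally, substituting $L_t=\frac{\epsilon C_t}{n(\tan\theta+1)}$ gives $nL_t=\frac{\epsilon C_t}{\tan\theta+1}\le\epsilon C_t$ (since $\theta\ge 0$), and therefore $|\widehat v|\le(1+\sqrt{2}\epsilon)C_t\le(1+2\epsilon)C_t$, as required. The argument is essentially a two-line triangle-inequality calculation; the only subtlety I expect is being careful about the sign of the real part for users in $\cN_-$, but the sign-respecting definitions of $\widehat P_{+,t},\widehat P_{-,t}$ and the bounds in \raf{eq:bds} already handle this cleanly.
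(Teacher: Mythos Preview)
Your argument is correct, and in fact cleaner than the paper's. The paper expands the squared magnitude $(\widehat P_{+,t}-\widehat P_{-,t})^2+\widehat P_{I,t}^2$, bounds the cross terms using \raf{eq:bds}, and then invokes the projection bounds \raf{eq:ubounds} (namely $P_{+,t}+P_{-,t}+P_{I,t}\le 2(1+\tan\theta)C_t$) to control the linear error term $2nL_t(P_{+,t}+P_{-,t}+P_{I,t})$. Your route bypasses all of this: you bound the complex perturbation $\widehat v-v$ directly via \raf{eq:bds}, obtaining $|\widehat v-v|\le\sqrt{2}\,nL_t\le\sqrt{2}\,\epsilon C_t$, and finish with a single triangle inequality. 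This even yields the sharper constant $1+\sqrt{2}\,\epsilon$ and, notably, never appeals to \raf{eq:ubounds} or to the value of $\tan\theta$ beyond the trivial $\tan\theta+1\ge 1$. The paper's expansion is presumably written to parallel the proof of Lemma~\ref{lem-trunc2}, where the direction of the comparison is reversed; but your triangle-inequality approach would streamline that lemma as well.
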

\begin{proof}
	Using~\raf{eq:ubounds} and~\raf{eq:bds}, for all $t\in \cT$,
	\begin{align*}
	&\Big( \sum_{k\in\cN} \sum_{j\in D_k: t \in T_{j} } x_{k,j} \widehat s_{k,j}^{\rm R}(t)\Big)^2 +  \Big(\sum_{k\in\cN} \sum_{j\in D_k: t \in T_{j} } x_{k,j} \widehat s_{k,j}^{\rm I}(t) \Big)^2 \\
	&\qquad=\left(\widehat P_{+,t}(x)- \widehat P_{-,t}(x)\right)^2 +   \widehat P_{I,t}^2( x)\nonumber\\
	&\qquad=\widehat P_{+,t}^2(x)+\widehat P_{-,t}^2(x)-2\widehat P_{+,t}(x) \widehat P_{-,t}(x)+ \widehat P_{I,t}^2(x)\nonumber\\ 
	&\qquad\le (P_{+,t}(x)+nL_t)^2+(P_{-,t}(x)+nL_t)^2-2P_{+,t}(x) P_{-,t}(x)+(P_{I,t}(x)+nL_t)^2\nonumber\\
	&\qquad=(P_{+,t}(x)-P_{-,t}(x))^2+P_{I,t}^2(x)+2nL_t (P_{+,t}(x)+P_{-,t}(x)+P_{I,t}(x))+3n^2L_t^2\nonumber\\
	&\qquad= \Big(\sum_{k\in\cN} \sum_{j\in D_k: t \in T_{j} } x_{k,j} s_{k,j}^{\rm R}(t) \Big)^2 + \Big(\sum_{k\in\cN} \sum_{j\in D_k: t \in T_{j} } x_{k,j} s_{k,j}^{\rm I}(t)  \Big)^2 \\
	&\qquad\qquad +2nL_t\Big(\sum_{k\in\cN} \sum_{j\in D_k: t \in T_{j} } x_{k,j}|s_{k,j}^{\rm R}(t)|+\sum_{k\in\cN} \sum_{j\in D_k: t \in T_{j} } x_{k,j} s_{k,j}^{\rm I}(t)\Big)+3n^2L_t^2\nonumber\\
	&\qquad\le C_t^2 + 4nL_t (\tan \theta +1) C_t + 3n^2L_t^2 = C_t^2 + 4 \epsilon C_t^2 + 3\epsilon^2 C_t^2/(1+\tan \theta)^2\nonumber \\ 
	&\qquad\le C_t^2 (1+4\epsilon + 3\epsilon^2)\le C_t^2(1+2\epsilon)^2. 
	\end{align*}
\end{proof}

\medskip

\begin{lemma}
	\label{lem-trunc2}
	Let $\widehat x$ be the solution returned by {\sc $m$-CSP-FPTAS}. Then, $$\Big|\sum_{k\in\cN} \sum_{j\in D_k: t \in T_{j} } \widehat x_{k,j} s_{k,j}(t)\Big|\le(1+4\epsilon) C_t \qquad \text{for all } t \in \cT.$$ 
\end{lemma}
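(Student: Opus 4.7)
The plan is to leverage the filtering condition in step \ref{cond1} of Algorithm \ref{CSP-biFPTAS}, which guarantees that the magnitude of the returned solution \emph{measured under the rounded demands} $\widehat s_{k,j}(t)$ is at most $(1+2\epsilon) C_t$, and then translate this into a bound under the original demands $s_{k,j}(t)$ by accounting for the per-component rounding error. Since each rounding shift is at most $L_t = \epsilon C_t/(n(\tan\theta+1))$ per demand and at most $n$ demands of $\widehat x$ contribute at any fixed time slot $t$, the aggregate rounding error in each of the three projections $P_{+,t}$, $P_{-,t}$, $P_{I,t}$ is at most $nL_t \le \epsilon C_t$. The goal is to absorb this perturbation into at most an additive $2\epsilon C_t$ slack on the outer magnitude bound.

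First, I would observe that {\sc $2m$DKP-Exact} enforces its capacity constraints as equalities, so the outputs $y_+, y_-$ summed in $\widehat x = y_+ + y_-$ realize the guessed projections exactly: $\widehat P_{+,t}(\widehat x) = \xi_+(t)$, $\widehat P_{-,t}(\widehat x) = \xi_-(t)$, and $\widehat P_{I,t}(\widehat x) = \zeta_+(t) + \zeta_-(t)$. The filter in step \ref{cond1} then yields $(\widehat P_{+,t}(\widehat x) - \widehat P_{-,t}(\widehat x))^2 + \widehat P_{I,t}^2(\widehat x) \le (1+2\epsilon)^2 C_t^2$, and in particular $|\widehat P_{+,t}(\widehat x) - \widehat P_{-,t}(\widehat x)| \le (1+2\epsilon) C_t$.

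Next, I would introduce the rounding gaps $\delta_\pm \triangleq \widehat P_{\pm,t}(\widehat x) - P_{\pm,t}(\widehat x)$ and $\delta_I \triangleq \widehat P_{I,t}(\widehat x) - P_{I,t}(\widehat x)$, each lying in $[0,nL_t]$ by~\raf{eq:bds}. Writing $P_{+,t} - P_{-,t} = (\widehat P_{+,t} - \widehat P_{-,t}) + (\delta_- - \delta_+)$ with $|\delta_- - \delta_+| \le nL_t$, and using $P_{I,t}^2 \le \widehat P_{I,t}^2$ (valid because $s^{\rm I}_{k,j}(t) \ge 0$ keeps both projections nonnegative), expansion of the square gives
\begin{align*}
(P_{+,t} - P_{-,t})^2 + P_{I,t}^2
&\le (\widehat P_{+,t} - \widehat P_{-,t})^2 + \widehat P_{I,t}^2 + 2|\widehat P_{+,t} - \widehat P_{-,t}|\cdot nL_t + (nL_t)^2 \\
&\le (1+2\epsilon)^2 C_t^2 + 2(1+2\epsilon)\epsilon C_t^2 + \epsilon^2 C_t^2 = (1+3\epsilon)^2 C_t^2,
\end{align*}
after substituting the filter bound and $nL_t \le \epsilon C_t$. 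Since $(1+3\epsilon)^2 \le (1+4\epsilon)^2$ for any $\epsilon > 0$, taking square roots proves the lemma.

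The main obstacle is the signed bookkeeping of the cross term $-2P_{+,t}P_{-,t}$, which is hidden inside the $(\delta_- - \delta_+)$ contribution: the rounding errors on the positive- and negative-quadrant projections can point in opposite directions and thus reinforce the difference rather than cancel it. The saving grace is that the filter controls only the \emph{difference} $\widehat P_{+,t} - \widehat P_{-,t}$, so what matters is $|\delta_- - \delta_+|$, which is uniformly bounded by $nL_t$ regardless of sign alignment. This mirrors the bound-reversal argument of Lemma~\ref{lem-trunc}, but traversed in the opposite direction from rounded back to original quantities.
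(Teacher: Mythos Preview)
Your argument is correct, and in fact cleaner than the paper's. The paper expands $(P_{+,t}-P_{-,t})^2$ into $P_{+,t}^2+P_{-,t}^2-2P_{+,t}P_{-,t}$, then does a case split on whether both $\widehat P_{+,t}$ and $\widehat P_{-,t}$ lie below $nL_t$ in order to lower-bound the product $P_{+,t}P_{-,t}$; this forces it to control the \emph{sum} $\widehat P_{+,t}+\widehat P_{-,t}=\xi_+(t)+\xi_-(t)$ via the explicit grid ranges $\cA_+(t),\cA_-(t)$, picking up an extra $4\epsilon$ term and landing at $(1+4\epsilon)$. You instead keep the difference $P_{+,t}-P_{-,t}$ intact and observe that the two rounding shifts $\delta_+,\delta_-\in[0,nL_t]$ have the same sign, so their net contribution to the difference is at most $nL_t$ (not $2nL_t$); combined with $|\widehat P_{+,t}-\widehat P_{-,t}|\le(1+2\epsilon)C_t$ drawn directly from the filter and $P_{I,t}^2\le\widehat P_{I,t}^2$, the cross terms assemble into the perfect square $(1+3\epsilon)^2C_t^2$. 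Your route avoids the case analysis, never touches the individual grid bounds on $\xi_\pm$, and yields a strictly sharper constant. The paper's approach has the minor advantage of being a more literal reversal of Lemma~\ref{lem-trunc}, but yours is the tighter and more transparent proof.
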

\begin{proof}
	As in the proof of Lemma~\ref{lem-trunc}, for all $t\in \cT$,
	\begin{align}
	\label{eq:ee1}
	&\Big( \sum_{k\in\cN} \sum_{j\in D_k: t \in T_{j} } \widehat x_{k,j} s_{k,j}^{\rm R}(t) \Big)^2 +  \Big(\sum_{k\in\cN} \sum_{j\in D_k: t \in T_{j} } \widehat x_{k,j}s_{k,j}^{\rm I}(t) \Big)^2=\left(P_{+,t}(\widehat x)- P_{-,t}(\widehat x)\right)^2 +  P_{I,t}^2(\widehat x)\nonumber\\
	&=P_{+,t}^2(\widehat x)+P_{-,t}^2(\widehat x)-2P_{+,t}(\widehat x)P_{-,t}(\widehat x)+P_{I,t}^2(\widehat x).
	\end{align} 
	If both $\widehat P_{+,t}(\widehat x)$ and $\widehat P_{-,t}(\widehat x)$  are less than $nL_t$, then the R.H.S. of \raf{eq:ee1} can be bounded by 
	\begin{eqnarray}\label{eq:ee2}
	\widehat P_{+,t}^2(\widehat x)+\widehat P_{-,t}^2(\widehat x)+\widehat P_{I,t}^2(\widehat x)
	&\le&\widehat P_{+,t}^2(\widehat x)+\widehat P_{-,t}^2(\widehat x)-2\widehat P_{+,t}(\widehat x) \widehat P_{-,t}(\widehat x)+2n^2L_t^2+\widehat P_{I,t}^2(\widehat x)\nonumber\\
	&=&(\widehat P_{+,t}(\widehat x)- \widehat P_{-,t}(\widehat x))^2+\widehat P_{I,t}^2(\widehat x)+2n^2L_t^2.
	\end{eqnarray}
	Otherwise, we bound the R.H.S. of  Eqn.~\raf{eq:ee1} by
	\begin{align}\label{eq:ee3}
	&\widehat P_{+,t}^2(\widehat x)+\widehat P_{-,t}^2(\widehat x)-2(\widehat P_{+,t}(\widehat x)-nL_t)(\widehat P_{-,t}(\widehat x)-nL_t)+\widehat P_{I,t}^2(\widehat x)\nonumber\\
	&\qquad=(\widehat P_{+,t}(\widehat x)-\widehat P_{-,t}(\widehat x))^2+\widehat P_{I,t}^2(\widehat x)+2nL_t(\widehat P_{+,t}(\widehat x)+ \widehat P_{-,t}(\widehat x))-2n^2L_t^2.
	\end{align}
	Since $\widehat x=y_+ +  y_-$ is obtained from feasible solutions $y_+$ and $y_-$ to \\$\text{\sc $2m$DKP-Exact}\Big(\{u_{k,j},(\widehat s_{k,j}(t)/L_t)_t\}_{k\in\cN_+, j \in D_k}, \big(\xi_+(t)/L_t\big)_t,\big(\zeta_+(t)/L_t\big)_t\Big)$
	and \\$\text{\sc $2m$DKP-Exact}\Big(\{u_{k,j},(-\widehat s_{k,j}(t)/L_t)_t\}_{k\in \cN_-, j \in D_k}, \big(\xi_-(t)/L_t\big)_t,\big(\zeta_-(t)/L_t\big)_t \Big)$, respectively, and  $\xi_+ , \xi_-,\zeta_+, \zeta_-$ satisfy the condition in Step~\ref{cond1} of Algorithm~\ref{CSP-biFPTAS}, it follows from \raf{eq:ee1}-\raf{eq:ee3} that
	\begin{align*}
	&\Big( \sum_{k\in\cN} \sum_{j\in D_k: t \in T_{j} } \widehat x_{k,j} s_{k,j}^{\rm R}(t) \Big)^2 +  \Big(\sum_{k\in\cN} \sum_{j\in D_k: t \in T_{j} }  \widehat x_{k,j} s_{k,j}^{\rm I}(t) \Big)^2\\
	&\qquad \le \Big( \sum_{k\in\cN} \sum_{j\in D_k: t \in T_{j} }  \widehat x_{k,j} \widehat s_{k,j}^{\rm R}(t)\Big)^2 +  \Big(\sum_{k\in\cN} \sum_{j\in D_k: t \in T_{j} } \widehat x_{k,j} \widehat s_{k,j}^{\rm I}(t) \Big)^2\\
	&\qquad\qquad+2nL_t\sum_{k\in\cN} \sum_{j\in D_k: t \in T_{j} } \widehat x_{k,j} |\widehat s_{k,j}^{\rm R}(t)|+2n^2L_t^2\nonumber\\
	&\qquad=(\xi_+(t) - \xi_-(t))^2 + (\zeta_+(t) + \zeta_-(t))^2 + 2nL_t(\xi_+(t)+\xi_-(t))+2n^2L_t^2\nonumber\\
	&\qquad\le \Big((1+2\epsilon)^2C_t^2 +4n\frac{\epsilon C_t}{n(\tan \theta+1)} (1+\tan \theta)C_t + 2n^2\frac{\epsilon^2C_t^2}{n^2(\tan \theta+1)^2}\Big)\\
	&\qquad\le \Big((1+2\epsilon)^2 +4\epsilon + 2\epsilon^2\Big)C_t^2\le (1+4\epsilon)^2C_t^2.
	\end{align*}
\end{proof}

\section{{$m$-CSP$[0,\tfrac{\pi}{2}]$} with Polynomial number of   Time Slots}\label{sec:poly}

In this section, we extend our results to  polynomial number of time slots $|\cT|$. We assume herein that all demands lie in the first quadrant of the complex plane (i.e., $\phi \triangleq \max_k\arg(s_{k,j}(t))\le \frac{\pi}{2}$ for $\forall~ t \in \cT$).
We provide a reduction to the unsplittable flow problem on a path with bag constraints \mbox{\sc (bag-UFP)} for which recent approximation algorithms are developed in the literature (see, e.g., \cite{Grandoni2015, chakaravarthy2014improved, elbassioni2012approximation}). We remark that {\sc bag-UFP} considers only real-valued demands, whereas in {\sc $m$-CSP} demands are complex-valued.
 We will show that such reduction will increase the approximation ratio of {\sc bag-UFP} by a constant factor of $\cos \tfrac{\phi}{2}$, where $\phi\le \tfrac{\pi}{2}$ is the maximum argument of any demand.
We will  need the following further assumption to accommodate the setting of {\sc bag-UFP}:
\begin{itemize}
	\item all demands are constant over time: $s_{k,j}(t) = s_{k,j}(t')$ for any $t,t' \in T_j$.
	 To simplify notation, let $s_{k,j}$ denote the unique demand over all time steps $T_j$. 
\end{itemize}
For convenience, we shall refer to the problem as {\sc $m$-CSP$'$} when restricted with the above assumption. When all demands in {\sc $m$-CSP$'$} are real-valued, the problem is  called \mbox{\sc bag-UFP}.
We can approximate an instance of {\sc $m$-CSP$'$} by an instance of ($\textsc{bag-UFP}$) defined as follows:
\vspace{-5pt}
\begin{align}
(\textsc{bag-UFP})&\quad \max_{x_{k,j} \in \{0,1\}} \sum_{k \in \cN} \sum_{j \in D_k} u_{k,j} x_{k,j} \nonumber\\
\text{s.t.} \quad&  \sum_{k \in \cN}\sum_{j \in D_k: t  \in T_j} |s_{k,j}| x_{k,j}  \le    C_{t}, \quad \text{ for all } t \in \cT\label{rcon}\\
&\sum_{j \in D_k} x_{k,j} \le 1 \qquad \text{ for all } k \in \cN.
\end{align}\vspace{-10pt}

Note that the absolute of the sum in Cons.~\raf{mc1} is replaced in {\sc bag-UFP} by the sum of the absolutes in Cons.~\raf{rcon}. Thus all demands in {\sc bag-UFP} are real-valued.

	We denote by {$\textsc{$m$-CSP}^{\ast}$} (resp., $\textsc{bag-UFP}^{\ast}$) the linear relaxation of  {$\textsc{$m$-CSP}'$} (resp., {$\textsc{bag-UFP}$}), that is, when $x_{k,j} \in [0,1]$ for all $k\in  \cN$, $j\in D_k$. 
		Let $\OPT$ and $\overline{\OPT}$ be the optimal objective values of {\sc $m$-CSP$'$} and {\sc bag-UFP} respectively. Also denote by $\OPT^*$ and $\overline{\OPT}^*$ the optimal objective value of $\textsc{$m$-CSP}^{\ast}$ and $\textsc{bag-UFP}^{\ast}$, respectively.
		
		We will show in Lemma~\ref{lem:nba} and  Theorem~\ref{thm:poly1} below that one can use the algorithms developed for {\sc bag-UFP} with bounded integrality gap to obtain approximate solutions to {\sc $m$-CSP$'[0,\frac{\pi}{2}]$}.

\begin{lemma}\label{lem:nba}
	Given a solution $\bar x \in \{0,1\}^{|\cI|}$  to  {\sc bag-UFP} such that  $u(\bar x) \ge \psi \cdot \overline\OPT^\ast$, $\psi \in [0,1]$, then $\bar x$ is feasible for {\sc $m$-CSP$'[0,\frac{\pi}{2}]$} and
	$u(\bar x) \ge \psi \cos\tfrac{\phi}{2}   \cdot \OPT.$
\end{lemma}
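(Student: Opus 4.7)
My plan is to separate the lemma into (i) feasibility of $\bar x$ for {\sc $m$-CSP$'[0,\tfrac{\pi}{2}]$}, and (ii) the utility bound $u(\bar x)\ge \psi\cos\tfrac{\phi}{2}\cdot\OPT$. Feasibility is an immediate consequence of the triangle inequality: for each $t\in\cT$,
\[
\Big|\sum_{k\in\cN}\sum_{j\in D_k:t\in T_j} s_{k,j}\bar x_{k,j}\Big|\le\sum_{k\in\cN}\sum_{j\in D_k:t\in T_j}|s_{k,j}|\bar x_{k,j}\le C_t,
\]
where the last inequality is the {\sc bag-UFP} capacity constraint~\raf{rcon} that $\bar x$ satisfies. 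The bag constraints~\raf{mc2} coincide in the two problems, so $\bar x$ is feasible for {\sc $m$-CSP$'[0,\tfrac{\pi}{2}]$}.

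To get the utility bound, I would reduce it to showing $\overline\OPT^\ast \ge \cos\tfrac{\phi}{2}\cdot\OPT$. The key geometric fact is that whenever complex numbers $z_1,\ldots,z_r$ have arguments inside an angular interval of width $\phi\le\pi$, one has $|\sum_i z_i|\ge \cos\tfrac{\phi}{2}\sum_i|z_i|$. (Rotate so the bisector of the angular window lies along the positive real axis; then $\re(z_i)\ge|z_i|\cos\tfrac{\phi}{2}$, and summing yields the bound.) Let $x^\ast$ be an optimal integer solution to {\sc $m$-CSP$'[0,\tfrac{\pi}{2}]$}. Applying the fact to the active demands at each time slot gives
\[
\cos\tfrac{\phi}{2}\sum_{k,j:t\in T_j}|s_{k,j}|x^\ast_{k,j}\le \Big|\sum_{k,j:t\in T_j}s_{k,j}x^\ast_{k,j}\Big|\le C_t,
\]
so $y\triangleq \cos\tfrac{\phi}{2}\cdot x^\ast$ is a feasible fractional solution to {\sc bag-UFP}$^\ast$: the capacity constraints become $\sum |s_{k,j}|y_{k,j}\le C_t$, the bag constraints are preserved since $\cos\tfrac{\phi}{2}\le 1$, and $y\in[0,1]^{|\cI|}$. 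Its objective value is $\cos\tfrac{\phi}{2}\cdot\OPT$, so $\overline\OPT^\ast\ge \cos\tfrac{\phi}{2}\cdot\OPT$.

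Combining with the hypothesis $u(\bar x)\ge \psi \overline\OPT^\ast$ yields $u(\bar x)\ge \psi\cos\tfrac{\phi}{2}\cdot\OPT$. The only nontrivial ingredient is the geometric inequality together with the scaling trick $y=\cos\tfrac{\phi}{2}\cdot x^\ast$, which converts a bound on the magnitude of a complex sum into a bound on the sum of magnitudes—exactly the real-valued form needed for {\sc bag-UFP}; everything else is bookkeeping.
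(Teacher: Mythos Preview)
Your proof is correct and follows essentially the same route as the paper: feasibility via the triangle inequality, then scaling an optimal {\sc $m$-CSP$'$} solution by $\cos\tfrac{\phi}{2}$ to obtain a feasible fractional solution for {\sc bag-UFP}$^\ast$, yielding $\overline\OPT^\ast\ge\cos\tfrac{\phi}{2}\cdot\OPT$. The only notable difference is that the paper invokes its Lemma~\ref{lem:tb} (proved by induction) for the inequality $\big|\sum_i z_i\big|\ge\cos\tfrac{\phi}{2}\sum_i|z_i|$, whereas your one-line projection-onto-the-bisector argument is a cleaner direct proof of the same fact.
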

\begin{proof}
	%
	Let $(x^\ast_{k,j})_{k \in \cN, j \in D_k}$ be an optimal solution for $\textsc{$m$-CSP}^{\ast}$.  
	Lemma~\ref{lem:tb} below implies that
	\begin{equation*}
	\cos\tfrac{\phi}{2} \cdot \sum_{k \in \cN} \sum_{j \in D_k: t \in T_j} |s_{k,j}|  x_{k,j}^\ast \le \Big| \sum_{k \in \cN} \sum_{j \in D_k: t \in T_j} s_{k,j}  x_{k,j}^\ast\Big| \le   C_t \qquad \forall t \in \cT.
	\end{equation*}	
	According to the above inequality, we can construct a feasible  solution $(\tilde x_{k,j})_{k \in \cN, j \in D_k}$ to $\textsc{bag-UFP}^{\ast}$ defined by \mbox{$\tilde x_{k,j} \triangleq \cos\tfrac{\phi}{2}\cdot  x_{k,j}^\ast$}.
	By the feasibility of $(\tilde x_{k,j})_{k \in \cN, j \in D_k}$,
	
	\begin{equation*}
	\overline\OPT^\ast \ge \sum_{k \in \cN} \sum_{j \in D_k} u_{k,j} \tilde x_{k,j} = \cos\tfrac{\phi}{2} \cdot \sum_{k \in \cN}\sum_{j \in D_k} u_{k,j} x^\ast_{k,j} = \cos\tfrac{\phi}{2} \cdot \OPT^\ast. \label{eq:rst}
	\end{equation*}
	Therefore, $u(\bar x) \ge \psi \cdot \overline\OPT^\ast\ge \psi \cdot \cos\tfrac{\phi}{2} \cdot \OPT^\ast \ge  \psi  \cos\tfrac{\phi}{2} \cdot \OPT$.
	
	It remains to show that $\bar x$ is feasible for {\sc $m$-CSP$'$}, which follows readily from the triangular inequality: 
	\begin{equation*}
	\Big| \sum_{k \in \cN} \sum_{j \in D_k : t \in T_j} s_{k,j}  \bar x_{k,j}\Big| \le   \sum_{k \in D_k} \sum_{j \in D_k : t \in T_j} |s_{k,j}|  \bar x_{k,j} \le   C_t \qquad \forall t \in \cT.
	\end{equation*}	
\end{proof}
\begin{lemma}[\cite{KKCMZ16}]
	\label{lem:tb}
	Given a set of vectors $\{d_i \in \RR^2\}_{i=1}^n$, then 
	$ \frac{\sum_{i=1}^n |d_i| }{\big| \sum_{i =1}^n d_i \big|} \le \sec\tfrac{\theta}{2},$ 
	where $\theta$ is the maximum angle between any pair of vectors  $\{d_i \in \RR^2\}_{i=1}^n$ and $0 \le \theta \le \frac{\pi}{2}$.
\end{lemma}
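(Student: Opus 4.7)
The plan is to exploit the fact that all the vectors lie within an angular cone of width at most $\theta$, and to lower bound $|\sum_i d_i|$ by projecting onto the bisector of this cone. Since projecting onto any fixed direction can only shrink the magnitude, this projection is a valid lower bound.

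Concretely, I would proceed as follows. First, without loss of generality I would rotate the coordinate system so that the angular bisector of the smallest cone containing all $d_i$'s points along the positive $x$-axis. Such a bisector exists because $\theta \le \frac{\pi}{2}<\pi$, so the vectors all lie in an open half-plane and the range of their arguments is a well-defined arc of length at most $\theta$. After this rotation, each vector $d_i$ makes an angle $\alpha_i$ with the positive $x$-axis satisfying $|\alpha_i|\le \frac{\theta}{2}$.

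Second, I would compute the $x$-component of $\sum_i d_i$, which is $\sum_{i=1}^n |d_i|\cos\alpha_i$. Since $|\alpha_i|\le \frac{\theta}{2}\le \frac{\pi}{4}$, we have $\cos\alpha_i \ge \cos\frac{\theta}{2}>0$, hence
\begin{equation*}
\sum_{i=1}^n |d_i|\cos\alpha_i \;\ge\; \cos\tfrac{\theta}{2}\cdot \sum_{i=1}^n |d_i|.
\end{equation*}
Third, I would use the elementary fact that the magnitude of a vector is at least its projection onto any unit direction, so
\begin{equation*}
\Bigl|\sum_{i=1}^n d_i\Bigr| \;\ge\; \sum_{i=1}^n |d_i|\cos\alpha_i \;\ge\; \cos\tfrac{\theta}{2}\cdot \sum_{i=1}^n |d_i|.
\end{equation*}
Rearranging gives $\frac{\sum_i |d_i|}{|\sum_i d_i|}\le \sec\frac{\theta}{2}$, as required.

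The only subtlety, and the main thing to justify cleanly, is the existence of the bisector in Step~1, i.e.\ that there is a direction whose angular distance to every $d_i$ is at most $\frac{\theta}{2}$. This follows because when $\theta\le\frac{\pi}{2}$ the arguments of the $d_i$ lie in a closed arc of the unit circle of length at most $\theta$; taking the midpoint of the smallest such enclosing arc as the bisector direction makes every $|\alpha_i|\le \frac{\theta}{2}$ by definition. Everything else is a direct calculation, so I expect no further obstacle.
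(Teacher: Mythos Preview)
Your proof is correct and takes a genuinely different route from the paper. The paper argues by induction on $n$: the base case $n=2$ is handled by maximizing the squared ratio as a one-variable function of $|d_2|/|d_1|$ via calculus, and the inductive step shows that the denominator, viewed as a function of the angle $\theta_r$ of the last vector, is concave, so its minimum over $[0,\theta_{r-1}]$ occurs at an endpoint; at either endpoint two vectors share an angle and can be merged, reducing to $r-1$ vectors. Your argument bypasses all of this: by projecting the sum onto the bisector of the enclosing cone you get $\bigl|\sum_i d_i\bigr|\ge\sum_i|d_i|\cos\alpha_i\ge\cos\tfrac{\theta}{2}\sum_i|d_i|$ in one line. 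This is shorter, requires no induction or optimization, and in fact works for any $\theta<\pi$, not only $\theta\le\tfrac{\pi}{2}$; the paper's approach has the modest advantage of making the tightness case ($n=2$ with $|d_1|=|d_2|$ at angle $\theta$) visible in the analysis. One tiny point worth stating explicitly in your write-up is that the denominator $\bigl|\sum_i d_i\bigr|$ is nonzero (which follows immediately since all nonzero $d_i$ have positive $x$-component after your rotation).
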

For completeness, we provide the proof in the appendix.

We can apply Lemma~\ref{lem:nba} using the recent LP-based algorithm by Grandoni et al.~\cite{Grandoni2015} to obtain the following result.
\begin{theorem}\label{thm:poly1}
	There exists an   $(\Omega(\log\log n/\log n),1)$-approximation for $\textsc{$m$-CSP}'[0,\frac{\pi}{2}]$. Additionally, if all demands have the same utility, we obtain $(\Omega(1),1)$-approximation.
\end{theorem}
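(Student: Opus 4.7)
The plan is to apply Lemma~\ref{lem:nba} as a black-box reduction to a suitable \textsc{bag-UFP} approximation algorithm. Given an instance of $\textsc{$m$-CSP}'[0,\frac{\pi}{2}]$, I would first construct the associated \textsc{bag-UFP} instance by replacing each complex demand $s_{k,j}$ by its magnitude $|s_{k,j}|$, keeping the same utilities, time windows, and capacities. The bag constraints~\eqref{mc2} carry over unchanged. This is exactly the instance used in Lemma~\ref{lem:nba}, so any integral \textsc{bag-UFP} solution with good objective value will lift automatically.

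Second, I would invoke the LP-based algorithm of Grandoni et al.~\cite{Grandoni2015}, which (without the no-bottleneck assumption) produces an integral feasible solution $\bar x$ to \textsc{bag-UFP} with $u(\bar x) \ge \Omega\!\left(\frac{\log\log n}{\log n}\right)\cdot \overline{\OPT}^{\ast}$. The crucial point is that their guarantee is against the \emph{LP relaxation} $\overline{\OPT}^{\ast}$, which is precisely the quantity compared to $\OPT^{\ast}$ in the proof of Lemma~\ref{lem:nba}.

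Third, applying Lemma~\ref{lem:nba} with $\psi = \Omega(\log\log n/\log n)$ immediately yields that $\bar x$ is feasible for $\textsc{$m$-CSP}'[0,\frac{\pi}{2}]$ and satisfies
\[
u(\bar x) \;\ge\; \psi\cos\tfrac{\phi}{2}\cdot \OPT \;\ge\; \Omega\!\left(\tfrac{\log\log n}{\log n}\right)\cdot \cos\tfrac{\pi}{4}\cdot \OPT \;=\; \Omega\!\left(\tfrac{\log\log n}{\log n}\right)\cdot \OPT,
\]
since $\phi \le \frac{\pi}{2}$ forces $\cos(\phi/2) \ge \cos(\pi/4) = \frac{\sqrt{2}}{2}$, an absolute constant that can be absorbed in the $\Omega(\cdot)$.

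For the unit-utility (cardinality) case, the same reduction pipeline applies, but I would substitute a constant-factor LP-based approximation for \textsc{bag-UFP} with uniform utilities in place of \cite{Grandoni2015}; such a bound follows from the same line of work (e.g., the cardinality version handled in \cite{Grandoni2015}, which achieves $\Omega(1)$ against the LP). Plugging $\psi = \Omega(1)$ into Lemma~\ref{lem:nba} and again using $\cos(\phi/2) \ge \frac{\sqrt{2}}{2}$ gives the claimed $(\Omega(1),1)$-approximation. The only non-mechanical step in the whole argument is ensuring that the cited \textsc{bag-UFP} algorithms deliver their guarantee against the LP optimum rather than only the integral optimum, since Lemma~\ref{lem:nba} relates $\overline{\OPT}^{\ast}$ to $\OPT$; this is what makes the reduction lose only the $\cos(\phi/2)$ factor and no more.
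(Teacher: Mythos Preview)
Your proposal is correct and matches the paper's approach exactly: the paper simply states that the theorem follows by applying Lemma~\ref{lem:nba} with the LP-based algorithm of Grandoni et al.~\cite{Grandoni2015}, and your write-up spells out precisely this reduction, including the key observation that the cited guarantee is against $\overline{\OPT}^{\ast}$ and that $\cos(\phi/2)\ge \tfrac{\sqrt{2}}{2}$ is an absolute constant absorbed into the $\Omega(\cdot)$.
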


Prior work has addressed an important restriction of  {\sc UFP} (also {\sc bag-UFP}) called the no bottleneck assumption (NBA), namely,  $\max_{k \in \cN,~j\in D_k} |s_{k,j}| \le C_{\min} \triangleq \min_{t \in \cT} C_t$, that is, the largest  demand is at most the smallest capacity over all time slots. Define the bottleneck time of demand $(k,j)$ by $b_{k,j}\triangleq \arg\min_{t\in T_j} C_t$. 
Given a constant $\delta \in [0,1]$, we call a demand $(k,j)$ $\delta$-small if $|s_{k,j}| \le \delta C_{b_{k,j}}$, otherwise we call it $\delta$-large.
We remark that the NBA assumption naturally holds in smart grids since individual demands are typically much smaller than the generation capacity over all time slots.
In the following, we show that there exists an $(\Omega(1),1)$-approximation for {\sc $m$-CSP$'[0,\frac{\pi}{2}]$}, under NBA. This is achieved by splitting demands to  $\delta$-small and $\delta$-large and solving each instance separately then taking the maximum utility solution. 
The next lemma is an extension to an earlier work by Chakrabarti et al.~\cite{chakrabarti2007approximation} (to accommodate complex-valued demands) used to derive a dynamic program that approximates $\delta$-large demands.
\begin{lemma}\label{lem:delta-large}
	The number of $\delta$-large demands that cross any time slot in any feasible solution is at most $2  \lfloor \frac{1}{\delta^2} \cdot {\sec\tfrac{\phi}{2}}  \rfloor$.
\end{lemma}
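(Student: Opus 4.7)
The plan is to fix an arbitrary time slot $t$ and an arbitrary feasible solution, and bound the set $D$ of selected $\delta$-large demands crossing $t$ by partitioning it according to the side of the bottleneck: $D_L = \{(k,j) \in D : b_{k,j} \le t\}$ and $D_R = \{(k,j) \in D : b_{k,j} > t\}$. By a symmetric argument it suffices to show $|D_L| \le \lfloor \sec\frac{\phi}{2}/\delta^2 \rfloor$, and then double.

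The first step is to locate a single reference time slot $t^\star$ at which \emph{every} demand in $D_L$ is active. I pick $(k^\star,j^\star) \in D_L$ maximizing $b_{k,j}$ and set $t^\star = b_{k^\star,j^\star} \le t$. For any $(k,j) \in D_L$, the interval $T_j$ contains both $b_{k,j}$ and $t$, hence contains $[b_{k,j},t] \ni t^\star$. Feasibility of the solution at $t^\star$, together with the fact that all demands lie in the first quadrant (so restricting the sum to $D_L$ only decreases its magnitude), gives $\bigl|\sum_{(k,j) \in D_L} s_{k,j}\bigr| \le C_{t^\star}$; applying Lemma~\ref{lem:tb} with maximum pairwise angle at most $\phi \le \pi/2$ then converts this into the real inequality $\sum_{(k,j) \in D_L} |s_{k,j}| \le \sec\frac{\phi}{2} \cdot C_{t^\star}$.

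The heart of the proof, and the main obstacle I expect, is to upgrade the pointwise $\delta$-largeness $|s_{k,j}| > \delta C_{b_{k,j}}$ into a uniform lower bound $|s_{k,j}| > \delta^2 C_{t^\star}$ valid for \emph{every} $(k,j) \in D_L$; this is what produces the $1/\delta^2$ factor rather than the naive $1/\delta$. The trick is to use $\delta$-largeness of $(k^\star,j^\star)$ together with the NBA. Since $(k^\star,j^\star)$ is $\delta$-large at its own bottleneck, $|s_{k^\star,j^\star}| > \delta C_{t^\star}$; the NBA gives $|s_{k^\star,j^\star}| \le C_{\min}$, so $C_{\min} > \delta C_{t^\star}$. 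Then for any $(k,j) \in D_L$, $\delta$-largeness gives $|s_{k,j}| > \delta C_{b_{k,j}}$, and since $C_{b_{k,j}} = \min_{s \in T_j} C_s \ge C_{\min} > \delta C_{t^\star}$, I obtain $|s_{k,j}| > \delta^2 C_{t^\star}$.

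Combining the two inequalities yields $|D_L| \cdot \delta^2 C_{t^\star} < \sec\frac{\phi}{2} \cdot C_{t^\star}$, hence $|D_L| \le \lfloor \sec\frac{\phi}{2}/\delta^2 \rfloor$. An analogous argument for $D_R$ picks the demand in $D_R$ with the \emph{smallest} bottleneck time to produce a reference slot $t^\star > t$ crossed by every demand in $D_R$; the same chain of inequalities yields $|D_R| \le \lfloor \sec\frac{\phi}{2}/\delta^2 \rfloor$, and summing the two bounds gives the claimed $2\lfloor \sec\frac{\phi}{2}/\delta^2 \rfloor$.
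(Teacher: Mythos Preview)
Your proof is correct and follows essentially the same approach as the paper: partition the crossing $\delta$-large demands by whether their bottleneck lies left or right of $t$, pick the extreme bottleneck $t^\star$ as a common reference slot, use the NBA on the demand realizing $t^\star$ to get $C_{\min}>\delta C_{t^\star}$, and combine with Lemma~\ref{lem:tb} to conclude. If anything, you are slightly more careful than the paper in explicitly invoking the first-quadrant assumption to justify that $\bigl|\sum_{(k,j)\in D_L}s_{k,j}\bigr|\le C_{t^\star}$ even though other (non-$D_L$) demands may also be active at $t^\star$.
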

\begin{proof}
	Given a feasible solution $\hat x$, let $S\triangleq \{(k,j) \in \cI \mid \hat x_{k,j} = 1, s_{k,j} > \delta b_{k,j}\}$  be the set of indices of $\delta$-large demands. Consider any time slot $t$, let $S_t\triangleq \{(k,j) \in S \mid t \in T_j \}$ be the set of demands that cross time  $t$. Then we partition $S_t$ to the sets $S_t^{\rm L}$ and $S_t^{\rm R}$, such that $S_t^{\rm L}$  (resp., $S_t^{\rm R}$) contains demands with bottleneck time slot on the left (resp., right)  of $t$. We show that $|S_t^{\rm L}| \le \lfloor \frac{1}{\delta^2} \cdot {\sec\tfrac{\phi}{2}}  \rfloor$, and a similar argument shows  the same bound for $|S_t^{\rm R}|$.
	
	Let $B$ be the set of bottleneck time slots for demands in $S_t^{\rm L}$. Now let $t' \in B$ be the rightmost bottleneck time slot in $B$. Since $t'$ is the bottleneck of some $\delta$-large demand $(k,j)$, i.e.,  $\delta C_{t'} < |s_{k,j}|$, and by the NBA assumption, $|s_{k,j}|\le C_{\min}$; it follows that $C_{t'} < \tfrac{C_{\min}}{\delta}$. 
	Because $t'$ is the rightmost time slot in $B$, all demands in $S_t^{\rm L}$ pass through $t'$, therefore $|\sum_{(k,j) \in S_t^{\rm L}} s_{k,j}|\le C_{t'}$. Since all demands $(k,j) \in S_t^{\rm L}$ are $\delta$-large,   $|s_{k,j}| > \delta C_{b_{k,j}} \ge \delta C_{\min}$. Therefore, using Lemma~\ref{lem:tb}
	\begin{align*}
	\delta C_{\min} |S_t^{\rm L}| <   \sum_{(k,j) \in S_t^{\rm L}} |s_{k,j}| \le   \sec \tfrac{\phi}{2} \Big| \sum_{(k,j) \in S_t^{\rm L}} s_{k,j}\Big| \le \sec \tfrac{\phi}{2} \cdot C_{t'} <\sec \tfrac{\phi}{2} \cdot  \tfrac{1}{\delta}C_{\min}.
	\end{align*}
	This gives $|S_t^{\rm L}| \le \lfloor \frac{1}{\delta^2} \cdot {\sec\tfrac{\phi}{2}}  \rfloor$. 
\end{proof}

\begin{theorem}\label{thm:poly2}
	Under the NBA assumption, there exists an  $(\Omega(1),1)$-approximation for $\textsc{$m$-CSP}'[0,\frac{\pi}{2}]$. The running time is $O(n^2)$.
\end{theorem}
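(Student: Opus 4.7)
The plan is to fix a small constant $\delta \in (0,1)$, partition the demand index set $\cI$ into the $\delta$-small demands $\cI_S$ and the $\delta$-large demands $\cI_L$, solve the restrictions of $\textsc{$m$-CSP}'[0,\tfrac{\pi}{2}]$ to $\cI_S$ and to $\cI_L$ separately, and output whichever of the two solutions has larger utility. Since every feasible solution $x^\ast$ decomposes as $x^\ast = x^\ast_S + x^\ast_L$ with $x^\ast_S$ supported on $\cI_S$ and $x^\ast_L$ on $\cI_L$, we have $\OPT \le \OPT_S + \OPT_L$, so if the two subroutines produce $\Omega(1)$-approximations of $\OPT_S$ and $\OPT_L$ respectively, the better of the two is an $\Omega(1)$-approximation of $\OPT$.

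For the $\delta$-small demands I would invoke Lemma~\ref{lem:nba} to pass to the corresponding $\textsc{bag-UFP}$ instance, losing only a factor $\cos\tfrac{\phi}{2}$ in the utility. The resulting $\textsc{bag-UFP}$ instance still satisfies the NBA and contains only $\delta$-small demands, so a standard greedy / LP-rounding argument in the spirit of Chakrabarti et al.~\cite{chakrabarti2007approximation} produces an $\Omega(1)$-approximation to $\overline{\OPT}_S^\ast$, hence by Lemma~\ref{lem:nba} an $\Omega(1)$-approximation to $\OPT_S$, and the produced solution is automatically feasible for $\textsc{$m$-CSP}'$ by the triangle inequality.

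For the $\delta$-large demands I would exploit Lemma~\ref{lem:delta-large}: at most $\lambda \triangleq 2\lfloor \tfrac{1}{\delta^2}\sec\tfrac{\phi}{2}\rfloor = O(1)$ large demands can be simultaneously active at any time slot. This makes a dynamic program over time slots polynomial: scan $t=1,\dots,m$ from left to right, and for each $t$ maintain, indexed by the subset $A$ of large demands currently active at $t$ (with $|A|\le \lambda$ and $\big|\sum_{(k,j)\in A} s_{k,j}\big|\le C_t$), the maximum utility achievable on slots $1,\dots,t$ subject to the bag constraint~\raf{mc2} and to ending with $A$ active. Transitioning to slot $t{+}1$ removes demands whose window ends and enumerates which new demands beginning at $t{+}1$ to include; for every candidate active set at $t{+}1$, feasibility is checked directly through the magnitude of the complex sum in $O(\lambda)$ time. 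Since there are $O(n^\lambda)$ candidate active sets per slot and $m \le \text{poly}(n)$, the DP runs in polynomial time and returns $\OPT_L$ exactly, so it is certainly an $\Omega(1)$-approximation.

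The main subtlety, and what makes this differ from the analogous real-valued argument, is that the capacity constraint is on the magnitude of a complex sum rather than a sum of magnitudes, so the DP state cannot be indexed by a scalar ``load''. The fix is precisely the one above: because $|A|\le \lambda = O(1)$ by Lemma~\ref{lem:delta-large}, keeping the whole set $A$ as the state is still polynomial, and feasibility of the complex capacity constraint is verified in $O(1)$ per state. Combining the $\delta$-small and $\delta$-large approximations as described, and optimizing $\delta$ to balance the two approximation ratios, yields the desired $(\Omega(1),1)$-approximation; the $O(n^2)$ running time then follows from using the simpler greedy for the small instance and noting that the DP for the large instance can be implemented so that per demand only $O(n)$ work is needed overall.
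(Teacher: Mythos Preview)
Your overall decomposition into $\delta$-small and $\delta$-large demands and your treatment of the small side via Lemma~\ref{lem:nba} both match the paper (the paper uses the primal-dual $\tfrac{1}{9}$-approximation of Chakaravarthy et al.~\cite{chakaravarthy2014improved} for {\sc bag-UFP} rather than the argument of~\cite{chakrabarti2007approximation}, but either way an $\Omega(1)$ LP-based guarantee suffices).

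The genuine gap is in your dynamic program for the $\delta$-large demands. The state ``set $A$ of currently active large demands'' is \emph{not} sufficient to enforce the bag constraint~\raf{mc2}: two partial solutions that end in the same active set $A$ at time $t$ may have used entirely different sets of \emph{already-finished} users, and hence admit different extensions. Concretely, if user $k$ has one large demand ending before $t$ and another starting after $t$, your DP cannot tell, from $A$ alone, whether the first was selected; it will either overcount (allowing both) or be ill-defined. So the DP does not return $\OPT_L$, nor any approximation of it. Separately, even if the bag issue were absent, the state space has $\Theta(n^{\lambda})$ configurations per slot with $\lambda = 2\lfloor \tfrac{1}{\delta^2}\sec\tfrac{\phi}{2}\rfloor$; for any fixed $\delta$ this is polynomial but far from the claimed $O(n^2)$, and your last sentence does not explain how ``$O(n)$ work per demand'' could possibly suffice.

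The paper sidesteps both issues for large demands by a different route: it uses Lemma~\ref{lem:delta-large} not to bound a DP state space but to argue that the intervals of the large demands in any feasible solution form an interval graph of clique number at most $\lambda$, hence can be $\lambda$-colored into pairwise-disjoint color classes. Restricting to \emph{disjoint} intervals therefore loses only a factor $\lambda$; the resulting problem is exactly weighted job interval selection (one interval per user, all selected intervals disjoint), for which Bar-Noy et al.~\cite{bar2001unified} give a $2$-approximation in $O(n\log n)$. Combining this with the $O(n^2)$ primal-dual algorithm for the small side yields the stated $O(n^2)$ overall bound.
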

\begin{proof}
	We set $\delta = \tfrac{1}{2}$. For small demands, Chakaravarthy et al. \cite{chakaravarthy2014improved} present a primal-dual $\tfrac{1}{9}$-approximation algorithm for {\sc bag-UFP} that runs in $O(n^2)$. 
	By Lemma~\ref{lem:nba}, this algorithm is  also $(\tfrac{1}{9}\cos \tfrac{\phi}{2})$-approximation to {\sc $m$-CSP$'$} with small demands. Large demands can be handled via a simple reduction to the weighted job interval selection problem proposed in~\cite{elbassioni2012approximation} (i.e., restrict selected demands to be disjoint). By Lemma~\ref{lem:delta-large} we loose a factor of $8 \sec \tfrac{\phi}{2}$, then we apply the $2$-approximation by Bar-Noy et al.~\cite{bar2001unified} which runs in $O(n \log n)$. Hence we obtain a $(\frac{1}{25}\cos \tfrac{\phi}{2},1)$-approximation that runs in $O(n^2)$.
\end{proof}

\section{Practical Greedy Approximation for {$1$-CSP}$[0,\frac{\pi}{2}]$} \label{sec:greedy}
%
%

In this section we give a practical greedy constant-factor approximation algorithm, presented in Algorithm~\ref{alg:gpa}, and denoted by {\sc 1-CSP-Greedy}, for the single time slot case ({\sc $1$-CSP}$[0,\frac{\pi}{2}]$) where $|\cT| = 1$. Despite the theoretical value of the PTAS and FPTAS presented in \cite{CKM14} (that are generalized in Sec.~\ref{sec:const}), the running time is quite large and hence impractical for real world applications. Algorithm {\sc 1-CSP-Greedy}, on the other hand, achieves $\big(\frac{1}{2}\cos \frac{\phi}{2},1\big)$-approximation in ${O}(N \log N)$ time, where $N \triangleq \sum_{k\in \cN} |D_k|$. 
This result can be derived directly by combining Lemma~\ref{lem:nba} (restricted to the case $|\cT|=1$) with the known analysis of the greedy algorithm for the multiple-choice Knapsack problem \cite{IH78} and its connection to the LP relaxation. However, we include the proof here for completeness.
Note that such a simple greedy algorithm can be 
used to provide a fast heuristic for practical settings when considering multiple time slots.
For instance, in the setting where users arrive online, {\sc 1-CSP-Greedy} could be applied to each time slot, after reducing the capacity by the magnitude of demands consumed in previous time slots.

Consider the simplified version of  {\sc $m$-CSP} denoted by {\sc 1-CSP} where $|\cT|=1$: 
	
	\begin{align}
	\textsc{(1-CSP)} \quad& 
	\displaystyle \max  \sum_{k\in \cN } \sum_{j \in D_k} u_{k,j} x_{k,j} \\
	\text{subject to }  \quad  &  \displaystyle\Big|\sum_{k\in \cN} \sum_{j\in D_k} s_{k,j}\cdot x_{k,j}\Big| \le C \label{stmc1}\\
	&\sum_{j \in D_k} x_{k,j} \le 1, \qquad \text{ for all } k \in \cN \label{stmc2}\\
	& x_{k,j}\in\{0,1\} \text{ for all }(k,j)\in \cI. \label{stmc3}
	\end{align} 

For convenience, we add a dummy demand to each set $D_k$, for each user $k\in \cN$ with utility of $0$ and demand of $s_{k,0}=\bzero$. This is to guarantee that a solution to {\sc 1-CSP} problem contains exactly one demand from each set $D_k$ for every user $k \in \cN$. Note that this change does not affect the {\sc 1-CSP} problem.
 
 If a user's complex-valued power demand is substituted in ({\sc 1-CSP}) by its real-valued magnitude, the inequality constraint~\raf{stmc2} is transformed into an equality constraint and the binary decision variables $x_{k,j}$ are relaxed such that they take non-negative real values (i.e., $(x_{k,j})_{k \in \cN, j \in D_k} \in [0, 1]^{|\cI|}$), the following linear programming (LP) problem is obtained.
 
 \begin{align}
 \textsc{(Rx1-CSP)} \quad& 
 \displaystyle \max  \sum_{k\in \cN } \sum_{j \in D_k} u_{k,j} x_{k,j} \\
 \text{subject to }  \quad  &  \displaystyle \sum_{k\in \cN} \sum_{j\in D_k} \Big| s_{k,j} \Big| \cdot x_{k,j} \le C \label{rxstmc1}\\
 &\sum_{j \in D_k} x_{k,j} = 1, \qquad \text{ for all } k \in \cN \label{sxsmc2}\\
 & x_{k,j}\in[0,1] \text{ for all }(k,j)\in \cI \, . \label{rxstmc3}
 \end{align}
 



We make use of the following statement.
\begin{proposition}[\cite{CHW75,IH78,KPP10book}]\label{prp:extme} 
	\begin{itemize}
	\item[(i)] If two demands $j, h \in D_k$ belonging to the same set $D_k$, for $k \in \cN$, with $|s_{k, j}| \leq |s_{k, h}|$ satisfy \begin{equation*}\label{def:dom}
	u_{k, j} \geq u_{k, h} \, ,
	\end{equation*}
	then an optimal solution to ({\sc 1-CSP}) with $x_{k, h} = 0$ exists.
   \item[(ii)] If two demands $j, h \in D_k$ belonging to the same set $D_k$, for $k \in \cN$, with $|s_{k, j}| \leq |s_{k, h}|$ satisfy
   \begin{equation*}\label{def:dom}
   \frac{u_{k, j}}{|s_{k, j}|} \leq \frac{u_{k, h}}{|s_{k, h}|} \, ,
   \end{equation*}
   then an optimal solution to {\sc Rx1-CSP} with $x_{k, j} = 0$ exists.
   \item[(iii)] If some demands $j, h, l \in D_k$, $k \in \cN$ with $|s_{k, j}| < |s_{k, h}| < |s_{k, l}|$, $u_{k, j} < u_{k, h} < u_{k, l}$, and $\frac{u_{k, j}}{|s_{k, j}|} \geq \frac{u_{k, h}}{|s_{k, h}|}\ge\frac{u_{k, l}}{|s_{k, l}|}$ satisfy
   \begin{equation*}\label{def:dom2}
     \frac{u_{k, h} - u_{k, j}}{|s_{k, h}| - |s_{k, j}|}\leq\frac{u_{k, l} - u_{k, h}}{|s_{k, l}| - |s_{k, h}|} \, ,
   \end{equation*}
   then an optimal solution to {\sc Rx1-CSP} with $x_{k, h} = 0$ exists.
\end{itemize}
\end{proposition}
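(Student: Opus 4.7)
My plan is to establish all three parts by local exchange arguments in the spirit of the multiple-choice knapsack analyses cited in \cite{CHW75,IH78,KPP10book}. In each case, I would start from an optimum that assigns positive weight to the allegedly dominated demand and redistribute that weight onto the dominating demand(s), simultaneously verifying the set-sum constraint, the magnitude capacity, and nonnegativity of the utility change.

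For (i), I would take any optimal $x^*$ with $x^*_{k,h}=1$; the multiple-choice constraint \raf{stmc2} forces $x^*_{k,j}=0$, and swapping these two values produces a candidate $\tilde x$. The utility change is $u_{k,j}-u_{k,h}\ge 0$, and feasibility with respect to the magnitude-based capacity is immediate because $|s_{k,j}|\le|s_{k,h}|$. For (ii), starting from an LP-optimum of \textsc{Rx1-CSP} with $x^*_{k,j}=t>0$, I would perform the three-way transfer: decrease $x^*_{k,j}$ by $|s_{k,h}|\epsilon$, increase $x^*_{k,h}$ by $|s_{k,j}|\epsilon$, and absorb the remaining mass into the dummy $x^*_{k,0}$ by $(|s_{k,h}|-|s_{k,j}|)\epsilon\ge 0$. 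By construction both \raf{sxsmc2} and \raf{rxstmc1} are preserved exactly (the magnitude terms $|s_{k,j}|\cdot|s_{k,h}|$ cancel), while the utility changes by
\begin{equation*}
\epsilon\bigl(|s_{k,j}|u_{k,h}-|s_{k,h}|u_{k,j}\bigr)=\epsilon|s_{k,j}|\,|s_{k,h}|\Bigl(\tfrac{u_{k,h}}{|s_{k,h}|}-\tfrac{u_{k,j}}{|s_{k,j}|}\Bigr)\ge 0;
\end{equation*}
pushing $\epsilon$ to its maximum feasible value and iterating drives $x^*_{k,j}$ to $0$. For (iii), I would set $\alpha\triangleq(|s_{k,l}|-|s_{k,h}|)/(|s_{k,l}|-|s_{k,j}|)\in(0,1)$, decrease $x^*_{k,h}$ by $\epsilon$, and increase $x^*_{k,j}$ and $x^*_{k,l}$ by $\alpha\epsilon$ and $(1-\alpha)\epsilon$ respectively. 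The choice of $\alpha$ makes $\alpha|s_{k,j}|+(1-\alpha)|s_{k,l}|=|s_{k,h}|$, so \raf{rxstmc1} and \raf{sxsmc2} are preserved, and the utility change rearranges to
\begin{equation*}
\tfrac{\epsilon}{|s_{k,l}|-|s_{k,j}|}\bigl((|s_{k,h}|-|s_{k,j}|)(u_{k,l}-u_{k,h})-(|s_{k,l}|-|s_{k,h}|)(u_{k,h}-u_{k,j})\bigr)\ge 0,
\end{equation*}
where the inequality is precisely the slope hypothesis.

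The main subtlety I anticipate is in part (i): the swap is clean in the magnitude-only formulation \textsc{Rx1-CSP}, which is what \textsc{1-CSP-Greedy} ultimately manipulates, but a naive application to the complex-valued constraint \raf{stmc1} need not preserve the aggregate magnitude even when the individual replacement satisfies $|s_{k,j}|\le|s_{k,h}|$, since the triangle inequality runs the wrong way. The argument therefore hinges on reducing to \textsc{Rx1-CSP}, where classical MCKP dominance applies verbatim and feasibility in the original \textsc{1-CSP} follows from the triangle-inequality bound $|\sum s_{k,j}x_{k,j}|\le\sum|s_{k,j}|x_{k,j}$ already used in Lemma~\ref{lem:nba}.
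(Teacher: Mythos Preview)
The paper does not actually prove this proposition; it is quoted from the real-valued multiple-choice knapsack literature \cite{CHW75,IH78,KPP10book}, where the packing constraint is the linear one $\sum_{k,j}|s_{k,j}|x_{k,j}\le C$. Your exchange arguments for (ii) and (iii) are the standard ones for that setting and are correct for \textsc{Rx1-CSP}; the boundary cases you allude to (``pushing $\epsilon$ to its maximum'') always terminate with the target variable at $0$, because if one of the increased variables reaches $1$, the equality constraint \raf{sxsmc2} already forces the decreased variable to $0$.

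The issue you flag in part (i) is real, and your proposed resolution does not close it: reducing to \textsc{Rx1-CSP} would prove a different statement than the one written, since (i) is asserted for \textsc{1-CSP} with the complex constraint~\raf{stmc1}. In fact, as stated for \textsc{1-CSP}$[0,\tfrac{\pi}{2}]$, part (i) is \emph{false}. Take $\cN=\{1,2\}$, $D_1=\{j,h\}$, $D_2=\{d\}$ with $s_{1,j}=2+i$, $s_{1,h}=1+4i$, $s_{2,d}=10+i$, utilities $u_{1,j}=u_{1,h}=10$, $u_{2,d}=100$, and $C=\sqrt{146}$. Then $|s_{1,j}|=\sqrt{5}<\sqrt{17}=|s_{1,h}|$ and $u_{1,j}\ge u_{1,h}$, so the hypothesis of (i) holds; yet $\{h,d\}$ is feasible ($|11+5i|^2=146$) with utility $110$, while $\{j,d\}$ is infeasible ($|12+2i|^2=148>146$), and every feasible solution avoiding $h$ has utility at most $100$. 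Hence no optimum of \textsc{1-CSP} has $x_{1,h}=0$.

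This does not affect the paper's use of the proposition: the sentence immediately following it invokes (i)--(iii) only to preprocess for \textsc{Rx1-CSP}, and the analysis of Theorem~\ref{th:gpa} compares the greedy output against the LP optimum $\OPT^*$ of \textsc{Rx1-CSP}, not against an optimum of \textsc{1-CSP}. Read against the magnitude-linear relaxation, your swap argument for (i) is perfectly valid and is what the cited references establish.
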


The above proposition implies that, without losing all optimal solutions to {\sc Rx1-CSP}, we can preprocess the demands of each set $D_k$, $k \in \cN$, to obtain a corresponding new set $R_k \subseteq D_k$ that satisfies:
\begin{align}\label{e1}
&|s_{k, 1}| \leq |s_{k, 2}| \leq ...\leq |s_{k, r_k}|,\\
&|u_{k, 1}| \leq |u_{k, 2}| \leq ...\leq |u_{k, r_k}|, \label{e2}\\
&\frac{u_{k, 1}}{|s_{k, 1}|} > \frac{u_{k, 2}}{|s_{k, 2}|} > ...> \frac{u_{k, r_k}}{|s_{k, r_k}|},\label{e3}\\
&\frac{u_{k, 2} - u_{k, 1}}{|s_{k, 2}| - |s_{k, 1}|} > \frac{u_{k, 3} - u_{k, 2}}{|s_{k, 3}| - |s_{k, 2}|} > ...> \frac{u_{k, r_k} - u_{k, r_k-1}}{|s_{k, r_k}| - |s_{k, r_{k-1}}|}.\label{e4}
\end{align}
Observe that this reduction requires only ${O}(\sum_{k \in \cN}|D_k|\log |D_k|)$ time, as it can be done by sorting each $D_k$ followed by a linear scan to remove the demands that do not appear in the optimal solution.

In \cite{IH78} (see also \cite[Chapter 11]{KPP10book}), it was also proved that the LP optimal solution to {\sc Rx1-CSP} problem may be found by a greedy algorithm which starts by finding the sets $R_k$ above. Assume the ordering $|s_{k, 1}| \leq |s_{k, 2}| \leq ...\leq |s_{k, r_k}|$ in $R_k$, where $r_k = |R_k|$.
Initially, the algorithm selects the dummy demand $s_{k,0}$ for each customer and sets the corresponding decision variables to $1$.   
Next, the greedy algorithm constructs a new set $E$ by combining all the sets $R_k$, $k \in \cN$ and setting $\tilde{u}_{k, j} = u_{k, j} - u_{k, j-1}$ and $|\tilde{s}_{k, j}| = |s_{k, j}| - |s_{k, j-1}|$ for $j = 1, ..., r_k$. After sorting entries in $E$ by their {\it efficiency}, defined as $\frac{\tilde{u}_{k, j}}{|\tilde{s}_{k, j}|}$ in non-increasing order, the greedy execution continues by selecting demands in the aforementioned sorted considering the capacity $C$. Each time an item $(k, j)$ is selected from set $E$, we assign $\tilde{x}_{k, j} \leftarrow 1$, $\tilde{x}_{k, j-1} \leftarrow 0$ and $\tau = \tau + |\tilde{s}_{k, j}|$, where the initial value of $\tau$ is $0$.  
	Assume at some iteration adding the next item $(k', j')$ to the current solution vector $\tilde{x}$ causes capacity violation, that is
	
	\begin{equation}
	\displaystyle \tau \le C \text{  and }  \tau + |\tilde{s}_{k', j'}| > C \, .
	\end{equation}
	
	The greedy execution is stopped at this point and the remaining capacity $C - \tau $ is occupied by the corresponding fractional part of the $(k', j')$ item's power demand and the item's $(k', j'-1)$ decision variable is set as follows: $$ \tilde{x}_{k', j'} = \frac{C  - \tau}{|\tilde{s}_{k', j'}|} \text{ and } \tilde{x}_{k', j'-1} = 1 - \tilde{x}_{k', j'} \, \text{ where } k' \in \cN, j' \in R_{k'}.$$
	
	In \cite{CHW75,IH78}, it was shown that this greedy strategy indeed produces an optimal solution to {\sc Rx1-CSP} problem containing at most two fractional variables that belong to adjacent users in the sorted set $R_{k'}$ as given above.  
 Note that algorithm {\sc 1-CSP-Greedy} is almost the same as this greedy algorithm algorithm described above, except that we drop the fractional values.
 
\begin{algorithm}[htb!]
	\caption{{\sc 1-CSP-Greedy}$ [\{u_{k,j}, s_{k,j}\}_{k\in\cN, j\in D_k}, C]$}
	\begin{algorithmic}[1]
		\Require Users' utilities and demands $\{u_{k,j}, s_{k,j}\}_{k\in\cN, j\in D_k}$; capacity $C$
		\Ensure $(\frac{1}{2}\cos \frac{\phi}{2},1)$-solution $\bar{x}$ to \textsc{1-CSP}
		\Initialize
		\Statex{$\bullet$} Add a dummy demand with zero utility  and zero demand to each set $D_k$, $k \in \cN$
		\Statex{$\bullet$} Sort users in each set $D_k$, $k \in \cN$ by the magnitude of their demands in increasing order such that if $j \le j'$, then $|s_{k, j}| \le |s_{k', j'}| \text{ for all } j', j \in D_k$
		\Statex{$\bullet$} For each $k \in \cN$ define a new set $R_k \subseteq D_k$ by successively testing the demands in  $D_k$ according to Eqns.~\raf{e1}-\raf{e4}. Assume the ordering $|s_{k, 1}| \leq |s_{k, 2}| \leq ... \leq |s_{k, r_k}|$ in $R_k$, where $r_k = |R_k|$
		\Statex{$\bullet$} $E \leftarrow \varnothing$, $\tilde{x}\leftarrow \bzero$, $\tilde{x}_{k, 0}\leftarrow 1$ for all $k \in \cN$, $\tau \leftarrow0$, $\widehat{x}\leftarrow \tilde{x}$
		\For{$k \in \cN$, $j = 1, ..., r_k$}

		\State $\tilde{u}_{k, j} \leftarrow u_{k, j} - u_{k, j-1}$, $\tilde{s}_{k, j} \leftarrow s_{k, j} - s_{k, j-1}$
		\State $E \leftarrow E \cup \{(k, j)\}$
		\EndFor
		\State Sort items in $E$ by their efficiency ($\frac{\tilde{u}_{k, j}}{|\tilde{s}_{k, j}|}$) in a non-increasing order
		\For{$(k, j) \in E$ (in the sorted order)} 
		\If{ $\tau+ \big|\tilde{s}_{k, j} \big| \le C$}
		\State $\tilde{x}_{k, j} \leftarrow 1$, $\tilde{x}_{k, j-1} \leftarrow 0$, $\tau \leftarrow \tau + \big|\tilde{s}_{k, j}\big|$
		\State \textbf{break}
		\EndIf
		\EndFor
		\State Set $ \widehat{x}_{k', j'} \leftarrow 1 \text{ for } (k', j') \triangleq \arg\max_{j \in R_{k}, k \in \cN} \{u_{k, j}\}$, $ \widehat{x}_{k', 0} \leftarrow 0$
		\State Set $\bar{x} \leftarrow \arg\max_{ x \in \{\widehat{x}, \tilde{x}\}}u(x)$ \label{alg:gh.max}
		\State \Return $\bar{x}$
	\end{algorithmic}
	\label{alg:gpa} 
\end{algorithm}


\begin{theorem} 
	Algorithm {\sc 1-CSP-Greedy} is $\big(\frac{1}{2} \cos \frac{\phi}{2},1\big)$-approximation for \textsc{1-CSP$[0,\frac{\pi}{2}]$}. The running time is $O(N\log N)$.
	\label{th:gpa}\vspace{-5pt}
\end{theorem}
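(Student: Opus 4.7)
The plan is to factor the guarantee $\tfrac{1}{2}\cos\tfrac{\phi}{2}$ into two independent pieces: the $\cos\tfrac{\phi}{2}$ factor is already supplied by Lemma~\ref{lem:nba} (specialized to $|\cT|=1$), and the remaining $\tfrac{1}{2}$ factor follows from the classical analysis of the greedy algorithm for multiple-choice knapsack. Concretely, it suffices to show that {\sc 1-CSP-Greedy} produces a solution of value at least $\tfrac{1}{2}\,\overline{\OPT}^{*}$, where $\overline{\OPT}^{*}$ is the LP optimum of the magnitude-based relaxation {\sc Rx1-CSP} (which coincides with $\textsc{bag-UFP}^{*}$ in the single time-slot case); Lemma~\ref{lem:nba} with $\psi=\tfrac{1}{2}$ then delivers both feasibility for {\sc 1-CSP}$[0,\tfrac{\pi}{2}]$ and the claimed utility bound in one shot.

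I would first justify the preprocessing. Proposition~\ref{prp:extme} ensures that pruning each $D_k$ down to the chain $R_k$ satisfying \raf{e1}-\raf{e4} preserves an optimum of {\sc Rx1-CSP}; in particular, \raf{e4} forces the incremental efficiencies $\tilde u_{k,j}/|\tilde s_{k,j}|$ to be strictly decreasing within every user. A telescoping identity then identifies ``upgrading'' user $k$ in $E$ up to level $\ell$ with the assignment $x_{k,\ell}=1$, carrying cumulative utility $\sum_{j\leq\ell}\tilde u_{k,j}=u_{k,\ell}$ and cumulative magnitude $\sum_{j\leq\ell}|\tilde s_{k,j}|=|s_{k,\ell}|$. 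Consequently, scanning $E$ in non-increasing efficiency order is exactly the Dantzig-type fractional greedy for {\sc Rx1-CSP}, and its integer part is $\tilde{x}$.

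Next, I would invoke the standard greedy analysis \cite{CHW75,IH78} (see also \cite[Ch.~11]{KPP10book}), which gives
\[
\overline{\OPT}^{*} \;\leq\; u(\tilde{x}) + \tilde u_{k^{*},j^{*}} \;\leq\; u(\tilde{x}) + u_{k^{*},j^{*}},
\]
where $(k^{*},j^{*})$ is the first item of $E$ that the loop rejects. Since $\widehat{x}$ selects the single demand of largest utility (individually feasible by the standing assumption $|s_{k,j}|\leq C$), we have $u(\widehat{x})\geq u_{k^{*},j^{*}}$, and hence
\[
u(\bar{x}) \;=\; \max\{u(\tilde{x}),\,u(\widehat{x})\} \;\geq\; \tfrac{1}{2}\,\overline{\OPT}^{*}.
\]
Feeding this into Lemma~\ref{lem:nba} closes the approximation argument, yielding $u(\bar{x})\geq \tfrac{1}{2}\cos\tfrac{\phi}{2}\cdot\OPT$ and feasibility for {\sc 1-CSP}$[0,\tfrac{\pi}{2}]$ simultaneously.

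For the running time, preprocessing each $R_k$ from $D_k$ costs $O(|D_k|\log|D_k|)$ (sort by magnitude, then one linear pass applying Proposition~\ref{prp:extme}); constructing and sorting $E$ by efficiency costs $O(N\log N)$; and the main loop together with the single-best-demand search is linear, for an overall $O(N\log N)$. The only delicate point — and the place I would be most careful in a full write-up — is the telescoping correspondence between the incremental variables $(\tilde u_{k,j},\tilde s_{k,j})$ and the original decision variables $x_{k,j}$ in the presence of the $\le$ (rather than $=$) constraint \raf{stmc2}; under \raf{e1}-\raf{e4} this is routine, but it is what lets us transfer the classical LP-greedy inequality verbatim to our algorithm.
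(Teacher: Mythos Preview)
Your proposal is correct and follows essentially the same route as the paper's proof. The paper spells out directly, via Lemma~\ref{lem:tb} and the triangle inequality, what you obtain in one stroke by invoking Lemma~\ref{lem:nba} with $\psi=\tfrac{1}{2}$; indeed, the paper explicitly remarks (just before the theorem) that the result ``can be derived directly by combining Lemma~\ref{lem:nba} \ldots\ with the known analysis of the greedy algorithm for the multiple-choice Knapsack problem,'' and then unpacks this for completeness. Your bound $\overline{\OPT}^{*}\le u(\tilde x)+\tilde u_{k^{*},j^{*}}$ is slightly sharper than the paper's $\OPT^{*}\le \hat p + u_{k',j'}$, but both lead to the same $\tfrac{1}{2}$ factor after comparing with $u(\widehat x)=u_{\max}$.
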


\begin{proof}

	Let $ S^\ast\subseteq {\cal I}$ be an optimal solution of ({\sc 1-CSP}), and denote by $\OPT$ and $\OPT^*$ the optimal objective values of ({\sc 1-CSP}) and ({\sc Rx1-CSP}), respectively.
	Denote by $E_s \triangleq E \, \setminus \{(k', j'), (k', j'-1)\}$, and let 
	\begin{equation}
	\widehat{p} \triangleq \sum\limits_{(k, j) \in E_s} u_{k, j} \tilde{x}_{k, j} \text{  and  } u_{\max} \triangleq \max_{j \in R_{k}, k \in \cN} \{u_{k, j}\} \, ,
	\end{equation}
	where $\tilde{x}$ is as defined in the algorithm. 
	For the optimal solution to {\sc Rx1-CSP} problem we get
	\begin{equation}
	\label{eq:sxoptl}
	\displaystyle \OPT^* =  \hat{p} +  \tilde{x}_{k', j'} u_{k', j'} + \tilde{x}_{k', j'-1} u_{k', j'-1} \leq \hat{p}  + u_{k', j'} \, .
	\end{equation}
	
	On the other hand, by Lemma \ref{lem:tb} it follows that
	
	\begin{equation}
	\displaystyle \cos \tfrac{\phi}{2}\cdot \sum_{(k, j) \in S^\ast} |s_{k, j}| \le \Big| \sum_{(k, j) \in S^\ast} s_{k, j} \Big | \le C \, .
	\end{equation}
	
Note that the subset $S^{\ast}$, which is an optimal solution to ({\sc 1-CSP}), becomes a feasible solution to {\sc Rx1-CSP} if  the relaxed decision variables are set $x_{k, j}=\cos \tfrac{\phi}{2}$ for all $(k, j) \in S^{\ast}$, $x_{k,0}=1-\sum_{(k,j)\in S^*}x_{k,j}$ and $x_{k, j}=0$ otherwise. 
This implies that
	
	\begin{equation}
	\label{eq:rxoptl4}
	\displaystyle \OPT^* \ge \cos \tfrac{\phi}{2} \cdot \sum_{(k, j) \in S^\ast} u_{k, j} = \cos \tfrac{\phi}{2} \cdot \OPT \,.
	\end{equation}
	
	Denote by $Z^{\textsc{Alg}}$ the utility of the output solution of {\sc 1-CSP-Greedy} when applied to {\sc 1-CSP} problem. To investigate the worst case approximation ratio of {\sc 1-CSP-Greedy} for {\sc 1-CSP} problem, consider Eqn.~\raf{eq:sxoptl} and observe that
	\begin{equation}
	\label{eq:rxoptlc2}
	\displaystyle \OPT^*\le \hat{p} + u_{\max} \, .
	\end{equation}
	
	Evidently, $Z^{\textsc {Alg}} \ge \hat{p}$. This gives
	\begin{equation}
	\label{eq:rxoptl2}
	\displaystyle \OPT^* \le Z^{\textsc {Alg}} + u_{\max} \,.
	\end{equation}
	
	From the formulation of algorithm {\sc 1-CSP-Greedy}, $Z^{\textsc {Alg}} \ge u_{\max}$, and hence by Eqns.~\raf{eq:rxoptl2} and \raf{eq:rxoptl4} it follows that
	\begin{equation}
	\label{eq:rxoptl5}
	\displaystyle Z^{\textsc Alg}  \ge \frac{1}{2} \cos \tfrac{\phi}{2} \cdot \OPT \,.
	\end{equation}
	
	Finally, note that the solution $\bar x$ is feasible for ({\sc 1-CSP}) by the triangular inequality. 
%
\end{proof}
\section{Extension to the Mixed Case}\label{sec:reduction}
In practical applications of the complex-demand scheduling  problem, we may have the situation when some of the users' demands are elastic in the sense that they can be partially satisfied. An example is an appliance that should be either supplied with a fixed amount of power, or switched off. Formally, we may assume that each user's demand is composed of two sets $D_k=D_k^{\cI}\cup D_k^{\cF}$, where each is a set of demands of the form $\{s_{k,j}(t)\}_{t\in T_j}$, as before. A feasible solution now would select, for each user $k$, one of the demands in $j\in D_k$ and assign either $x_{k,j}\in\{0,1\}$ if $j\in D_k^{\cI}$ or $x_{k,j}\in[0,1]$ if $j\in D_k^{\cF}$.  

We show in this section that we can reduce this mixed case to the fully inelastic case. First, we note that 
\begin{equation}\label{eq:LB}
\OPT\ge LB\triangleq\max\left\{\max_{k\in\cN,~j\in D_k^{\cI}}u_{k,j},\max_{k\in\cN,~j\in D_k^{\cF}}\min\left\{\min_{t\in T_j}\frac{u_{k,j}C_t}{|s_{k,j}(t)|},u_{k,j}\right\}\right\}.
\end{equation}

Let $\cO=\{u_{k,j}, \{s_{k,j}(t)\}_{t\in T_{j}}\}_{k\in\cN, j\in D_k=D_k^{\cI}\cup D_k^{\cF}}$, we construct a fully inelastic instance $\cO'=\{u'_{k,j},$ $\{s'_{k,j}(t)\}_{t\in T_{j}}\}_{k\in\cN, j\in D_k'}$ as follows. Let $\epsilon\in(0,1)$ be an arbitrary constant. For each $k\in\cN$,
we define the set $D_k'=D_k^{\cI}\cup D_k''$, where $D_k''$ is defined as follows. For each $k\in\cN$, $j\in D_k^{\cF}$, we introduce a number of $n_{k,j}=\left\lceil\log_{1+\epsilon}\frac{n u_{k,j}}{\epsilon\cdot LB}\right\rceil$ new demands, given by 
$s_{k,j}^i(t)=\frac{\epsilon\cdot LB(1+\epsilon)^i s_{k,j}(t)}{nu_{k,j}}$ for $t\in T_{j}$, with utility $u_{k,j}^i=\frac{\epsilon\cdot LB(1+\epsilon)^i}{n}$. Then we set $D_k''=\left\{(j,i)~:i=1,\ldots,n_{k,j}~\right\}$, where $(j,i)\in D_k''$ indices the demand $\{s^i_{k,j}(t)\}_{t\in T_{j}}$; we denote the corresponding variable in the formulation \textsc{($m$-CSP)} of the new instance by $x_{k,j}^i$. 

Given a solution $x$ for $\cO'$, we construct a solution $\widehat x$ for $\cO$ in the obvious way: if $j\in D_k^{\cI}$, then we set $\widehat x_{k,j}=x_{k,j}$; otherwise, if $x_{k,j}^i=1$, we set $\widehat x_{k,j}=\frac{\epsilon\cdot LB(1+\epsilon)^i}{nu_{k,j}}$.  
\begin{lemma}
	\label{lem:rel}
Let $x$ be an $(\alpha,\beta)$-approximate solution for $\cO'$. Then $\widehat x$ is a $((1-\epsilon)\alpha,\beta)$-approximate solution for $\cO$. 
\end{lemma}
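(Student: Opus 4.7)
The strategy rests on two simple comparisons. First I would show by direct substitution that the construction $x\mapsto\widehat x$ preserves both the aggregate complex demand at every time slot and the total utility: whenever $x^i_{k,j}=1$ the definitions give $\widehat x_{k,j}s_{k,j}(t)=s^i_{k,j}(t)$ and $\widehat x_{k,j}u_{k,j}=u^i_{k,j}$, so summing over all users and demands (and keeping inelastic terms unchanged) yields $\sum_{k,j}\widehat x_{k,j}s_{k,j}(t)=\sum_{k,j,i}x^i_{k,j}s^i_{k,j}(t)$ and $u(\widehat x)=u(x)$. This immediately transfers the $\beta$-feasibility of $x$ for $\cO'$ to that of $\widehat x$ for $\cO$ with the same $\beta$, and reduces the lemma to proving $\OPT(\cO')\ge (1-\epsilon)\cdot\OPT(\cO)$: then $u(\widehat x)=u(x)\ge\alpha\,\OPT(\cO')\ge (1-\epsilon)\alpha\,\OPT(\cO)$, which is the desired guarantee.

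For the optimum comparison I would take an optimal $x^\ast$ for $\cO$ and build a candidate $\tilde x$ for $\cO'$ by geometric down-rounding of its elastic part. Concretely, copy all inelastic variables, and for each elastic $(k,j)$ with $x^\ast_{k,j}>0$ set $i^\ast_{k,j}=\lfloor\log_{1+\epsilon}(nu_{k,j}x^\ast_{k,j}/(\epsilon\cdot LB))\rfloor$ and activate the level $(j,i^\ast_{k,j})\in D_k''$ whenever $i^\ast_{k,j}\ge 1$, otherwise drop $(k,j)$. The associated effective coefficient $\widehat x^\ast_{k,j}=\epsilon LB(1+\epsilon)^{i^\ast_{k,j}}/(nu_{k,j})$ satisfies $\widehat x^\ast_{k,j}\le x^\ast_{k,j}$ by the floor, and feasibility of $\tilde x$ in $\cO'$ follows from the identity in the first paragraph together with the per-user same-quadrant assumption of Sec.~\ref{sec:model}, which ensures that shrinking one coefficient per user does not enlarge the complex aggregate magnitude. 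By the definition of $i^\ast_{k,j}$ each kept elastic demand contributes $u^{i^\ast_{k,j}}_{k,j}\ge u_{k,j}x^\ast_{k,j}/(1+\epsilon)$, while each dropped one obeys $u_{k,j}x^\ast_{k,j}<\epsilon LB(1+\epsilon)/n$; since at most $n$ demands are dropped, the additive loss is at most $\epsilon LB(1+\epsilon)\le O(\epsilon)\,\OPT(\cO)$ by the definition~\raf{eq:LB} of $LB$. Combining,
\[
\OPT(\cO')\ge u(\tilde x)\ge \tfrac{1}{1+\epsilon}\,\OPT(\cO)-\epsilon LB(1+\epsilon)\ge (1-O(\epsilon))\,\OPT(\cO),
\]
and a constant rescaling of the input $\epsilon$ yields the $(1-\epsilon)$ factor stated in the lemma.

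The main obstacle is the balancing of the two loss sources in the second step: the geometric grid $\{(1+\epsilon)^i\}$ must be fine enough that each kept elastic demand loses at most a $(1+\epsilon)$-factor, yet coarse enough that its smallest point $\epsilon LB/(nu_{k,j})$ is not too close to zero, so that the additive loss from the at-most-$n$ dropped demands is still $O(\epsilon\,\OPT)$. The tailored choice of $LB$ in~\raf{eq:LB} is precisely what makes both losses simultaneously controllable; executing the bookkeeping for this balance, and verifying feasibility of $\tilde x$ under the per-user quadrant restriction, is where the argument needs to be carried out with care.
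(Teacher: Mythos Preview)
Your proposal is correct and follows essentially the same route as the paper: round an optimal mixed solution $x^\ast$ for $\cO$ down to the geometric grid to exhibit a feasible $\tilde x$ for $\cO'$ with $u(\tilde x)\ge (1-O(\epsilon))\OPT(\cO)$, and transfer the $\beta$-feasibility of $x$ to $\widehat x$ via the identity $\widehat x_{k,j}s_{k,j}(t)=s^i_{k,j}(t)$. You are in fact slightly more explicit than the paper in two places---you spell out $u(\widehat x)=u(x)$ (the paper leaves this implicit) and you flag that the combined loss is $(1-O(\epsilon))$ rather than exactly $(1-\epsilon)$, requiring a rescaling of $\epsilon$---but the structure and the key ideas are identical.
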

\begin{proof}
	Let $x^*$ be an optimal solution for $\cO$. We round $x^*$ to a $(1-\epsilon,1)$-approximate solution $\widetilde x$ for $\cO'$ as follows. If $j\in D_k^{\cI}$, we keep $\widetilde x_{k,j}=x^*_{k,j}\in\{0,1\}$. Otherwise, $x^*_{k,j}\in[0,1]$ is positive only for at most one index $j\in D_k^{\cF}$. In this case, we set $\widetilde x_{k,j}=0$ if $x^*_{k,j}<\frac{\epsilon\cdot LB}{n u_{k,j}}$, and otherwise set $\widetilde x_{k,j}=\frac{\epsilon\cdot LB(1+\epsilon)^i}{n u_{k,j}}$, where $i$ is the largest integer $i'$ such that  
	$\frac{\epsilon\cdot LB(1+\epsilon)^{i'}}{n u_{k,j}}\le x^*_{k,j}$. 
	Note that $\widetilde x$ is feasible for $\cO'$ since $\widetilde x\le x^*$. Furthermore, $u(\widetilde x)\ge (1-\epsilon)u(x^*)$, since the total utility corresponding to all the variables that are dropped to $0$ is at most $$\sum_{k\in\cN,~j\in D_k^{\cF}}u_{k,j}\cdot \frac{\epsilon\cdot LB}{n u_{k,j}}\le \epsilon\cdot\OPT=\epsilon\cdot u(x^*),$$ while for all other variables we have $\widetilde x_{k,j}\ge (1-\epsilon)x^*_{k,j}$. Moreover,
	\begin{align*}
	\sum_{k\in \cN} \sum_{j\in D_k: T_{j} \ni t} s_{k,j}(t)\cdot \widehat x_{k,j}&=\sum_{k\in \cN} \sum_{j\in D_k^{\cI}: T_{j} \ni t,~x_{k,j}=1} s_{k,j}(t)\\ &\qquad+\sum_{k\in \cN} \sum_{j\in D_k^{\cF}: T_{j} \ni t,~x_{k,j}^i=1}s_{k,j}(t)\frac{\epsilon\cdot LB(1+\epsilon)^i}{nu_{k,j}}\\	
	&=\sum_{k\in \cN} \sum_{j\in D_k^{\cI}: T_{j} \ni t} s_{k,j}(t)x_{k,j}+\sum_{k\in \cN} \sum_{(j,i)\in D_k'': T_{j} \ni t} s^i_{k,j}(t)x_{k,j}^i\\
	&\le \beta\cdot C_t,
	\end{align*}
	by the $\beta$-feasibility of $x$ for $\cO'$.
	The lemma follows.
\end{proof}

\section{Conclusion}\label{sec:concl}
This paper extends the previous results known for the single time slot case ({\sc CKP}) to a more general scheduling setting.  When the number of time slots $m$ is  constant, both the previously known PTAS and FPTAS are extended to handle multiple-time slots, multiple user preferences, and handle mixed elastic and inelastic demands. For polynomial $m$, a reduction is presented from {\sc CSP$[0,\tfrac{\pi}{2}]$} to the real-valued {\sc bag-UFP}, which can be used to obtain algorithms for {\sc CSP$[0,\tfrac{\pi}{2}]$} based on {\sc bag-UFP} algorithms that have bounded integrability gap for their LP-relaxation. We further presented a practical greedy algorithm that can be  implemented efficiently in real systems.
As a future work, it would be interesting to improve the second case (polynomial $m$) to a constant-factor approximation, following the recent results in \cite{AGLW13}. Additionally, it might be of interest to consider different objective functions such as minimizing the maximum peak consumption at any time slot. Complementing this paper, extended algorithms have been developed for more sophisticated settings, such as online algorithm for {\sc CSP} \cite{KKCE16b} and scheduling in electrical power networks \cite{MCK17, MCK18a, MCK18b, MCK18c}

%

\section*{Acknowledgments}
We thank the anonymous reviewers for careful reading and helpful comments.
\section*{References}

\bibliographystyle{splncs03}
\bibliography{reference}

\begin{thebibliography}{10}
\expandafter\ifx\csname url\endcsname\relax
  \def\url#1{\texttt{#1}}\fi
\expandafter\ifx\csname urlprefix\endcsname\relax\def\urlprefix{URL }\fi
\expandafter\ifx\csname href\endcsname\relax
  \def\href#1#2{#2} \def\path#1{#1}\fi

\bibitem{GS94power}
J.~Grainger, W.~Stevenson, Power System Analysis, McGraw-Hill, 1994.

\bibitem{DR09}
C.-L. Su, D.~Kirschen, Quantifying the effect of demand response on electricity
  markets, Power Systems, IEEE Transactions on 24~(3) (2009) 1199--1207.
\newblock \href {http://dx.doi.org/10.1109/TPWRS.2009.2023259}
  {\path{doi:10.1109/TPWRS.2009.2023259}}.

\bibitem{YC13CKP}
L.~Yu, C.-K. Chau, {Complex-demand Knapsack Problems and Incentives in AC Power
  Systems}, in: Proceedings of the 2013 International Conference on Autonomous
  Agents and Multi-agent Systems, AAMAS '13, Richland, SC, 2013, pp. 973--980.

\bibitem{woeginger2000does}
G.~J. Woeginger, When does a dynamic programming formulation guarantee the
  existence of a fully polynomial time approximation scheme ({FPTAS})?, INFORMS
  Journal on Computing 12~(1) (2000) 57--74.

\bibitem{CKM14}
C.-K. Chau, K.~Elbassioni, M.~Khonji, Truthful mechanisms for combinatorial ac
  electric power allocation, in: Proceedings of the 2014 International
  Conference on Autonomous Agents and Multi-agent Systems, AAMAS '14, Richland,
  SC, 2014, pp. 1005--1012, {http://arxiv.org/abs/1403.3907}.

\bibitem{CKM15}
C.-K. Chau, K.~Elbassioni, M.~Khonji, Truthful mechanisms for combinatorial
  allocation of electric power in alternating current electric systems for
  smart grid, ACM Transactions on Economics and Computation 5 (2016) 7:1--7:29,
  http://arxiv.org/abs/1507.01762.

\bibitem{KCE14}
M.~Khonji, C.~K. Chau, K.~Elbassioni, Inapproximability of power allocation
  with inelastic demands in ac electric systems and networks, in: 2014 23rd
  International Conference on Computer Communication and Networks (ICCCN),
  2014, pp. 1--6.

\bibitem{KT15}
K.~Elbassioni, T.~T. Nguyen, Approximation schemes for multi-objective
  optimization with quadratic constraints of fixed cp-rank, in: Proceedings of
  the 4th International Conference on Algorithmic Decision Theory - Volume
  9346, ADT 2015, Springer-Verlag, Berlin, Heidelberg, 2015, pp. 273--287.

\bibitem{KKCMZ16}
A.~Karapetyan, M.~Khonji, C.~K. Chau, K.~Elbassioni, H.~H. Zeineldin, Efficient
  algorithm for scalable event-based demand response management in microgrids,
  IEEE Transactions on Smart Grid 9~(4) (2018) 2714--2725.
\newblock \href {http://dx.doi.org/10.1109/TSG.2016.2616945}
  {\path{doi:10.1109/TSG.2016.2616945}}.

\bibitem{MCK16}
M.~Khonji, C.~K. Chau, K.~Elbassioni, Optimal power flow with inelastic demands
  for demand response in radial distribution networks, IEEE Transactions on
  Control of Network Systems 5~(1) (2018) 513--524.
\newblock \href {http://dx.doi.org/10.1109/TCNS.2016.2622362}
  {\path{doi:10.1109/TCNS.2016.2622362}}.

\bibitem{darmann2010resource}
A.~Darmann, U.~Pferschy, J.~Schauer, Resource allocation with time intervals,
  Theoretical Computer Science 411~(49) (2010) 4217--4234.

\bibitem{BNC07}
N.~Bansal, A.~Chakrabarti, A.~Epstein, B.~Schieber, A quasi-ptas for
  unsplittable flow on line graphs, in: STOC, ACM, 2006, pp. 721--729.
\newblock \href {http://dx.doi.org/10.1145/1132516.1132617}
  {\path{doi:10.1145/1132516.1132617}}.

\bibitem{anagnostopoulos2014mazing}
A.~Anagnostopoulos, F.~Grandoni, S.~Leonardi, A.~Wiese, A mazing 2+
  $\varepsilon$ approximation for unsplittable flow on a path, in: SODA, SIAM,
  2014, pp. 26--41.

\bibitem{chekuri2007multicommodity}
C.~Chekuri, M.~Mydlarz, F.~B. Shepherd, Multicommodity demand flow in a tree
  and packing integer programs, ACM Transactions on Algorithms (TALG) 3~(3).
\newblock \href {http://dx.doi.org/10.1145/1273340.1273343}
  {\path{doi:10.1145/1273340.1273343}}.

\bibitem{chakaravarthy2010varying}
V.~T. Chakaravarthy, V.~Pandit, Y.~Sabharwal, D.~P. Seetharam, Varying
  bandwidth resource allocation problem with bag constraints, in: IEEE
  International Symposium on Parallel \& Distributed Processing ({IPDPS}),
  2010, pp. 1--10.

\bibitem{spieksma1999approximability}
F.~C. Spieksma, On the approximability of an interval scheduling problem,
  Journal of Scheduling 2~(5) (1999) 215--227.

\bibitem{elbassioni2012approximation}
K.~Elbassioni, N.~Garg, D.~Gupta, A.~Kumar, V.~Narula, A.~Pal, {Approximation
  Algorithms for the Unsplittable Flow Problem on Paths and Trees}, in: IARCS
  Annual Conference on Foundations of Software Technology and Theoretical
  Computer Science (FSTTCS 2012), Vol.~18 of Leibniz International Proceedings
  in Informatics (LIPIcs), Dagstuhl, Germany, 2012, pp. 267--275.

\bibitem{Grandoni2015}
F.~Grandoni, S.~Ingala, S.~Uniyal, Improved Approximation Algorithms for
  Unsplittable Flow on a Path with Time Windows, Springer International
  Publishing, 2015, pp. 13--24.

\bibitem{nemirovski2008interior}
A.~S. Nemirovski, M.~J. Todd, Interior-point methods for optimization, Acta
  Numerica 17~(1) (2008) 191--234.

\bibitem{EN17}
K.~Elbassioni, T.~T. Nguyen, Approximation algorithms for binary packing
  problems with quadratic constraints of low cp-rank decompositions, Discrete
  Applied Mathematics 230 (2017) 56--70.

\bibitem{SR10}
B.~Patt-Shamir, D.~Rawitz, Vector bin packing with multiple-choice, in:
  Algorithm Theory - SWAT 2010, Springer Berlin Heidelberg, 2010, pp. 248--259.
\newblock \href {http://dx.doi.org/10.1007/978-3-642-13731-0_24}
  {\path{doi:10.1007/978-3-642-13731-0_24}}.

\bibitem{GLS88}
M.~Gr\"{o}tschel, L.~Lov\'{a}sz, A.~Schrijver, Geometric Algorithms and
  Combinatorial Optimization, Springer, New York, 1988.

\bibitem{S86}
A.~Schrijver, Theory of Linear and Integer Programming, Wiley, New York, 1986.

\bibitem{chakaravarthy2014improved}
V.~T. Chakaravarthy, A.~R. Choudhury, S.~Gupta, S.~Roy, Y.~Sabharwal, Improved
  algorithms for resource allocation under varying capacity, in: Algorithms-ESA
  2014, Springer Berlin Heidelberg, Berlin, Heidelberg, 2014, pp. 222--234.

\bibitem{chakrabarti2007approximation}
A.~Chakrabarti, C.~Chekuri, A.~Gupta, A.~Kumar, Approximation algorithms for
  the unsplittable flow problem, Algorithmica 47~(1) (2007) 53--78.

\bibitem{bar2001unified}
A.~Bar-Noy, R.~Bar-Yehuda, A.~Freund, J.~(Seffi)~Naor, B.~Schieber, A unified
  approach to approximating resource allocation and scheduling, J. ACM 48~(5)
  (2001) 1069--1090.
\newblock \href {http://dx.doi.org/10.1145/502102.502107}
  {\path{doi:10.1145/502102.502107}}.

\bibitem{IH78}
T.~Ibaraki, T.~Hasegawa, {The Multiple-Choice Knapsack Problem}, Journal of the
  Operations Research Society of Japan 21~(1) (1978) 59--93.

\bibitem{CHW75}
A.~Chandra, D.~Hirschberg, C.~Wong, Approximate algorithms for the knapsack
  problem and its generalizations, IBM Research Report RC56l6, IBM T. J. Watson
  Research Center.

\bibitem{KPP10book}
H.~Kellerer, U.~Pferschy, D.~Pisinger, Knapsack Problems, Springer, 2010.

\bibitem{AGLW13}
A.~Anagnostopoulos, F.~Grandoni, S.~Leonardi, A.~Wiese, Constant integrality
  gap {LP} formulations of unsplittable flow on a path, in: International
  Conference Integer Programming and Combinatorial Optimization ({IPCO}), 2013,
  pp. 25--36.

\bibitem{KKCE16b}
A.~Karapetyan, M.~Khonji, C.-K. Chau, K.~Elbassioni, Online algorithm for
  demand response with inelastic demands and apparent power constraint, Tech.
  rep., Masdar Institute, https://arxiv.org/abs/1611.00559 (2016).

\bibitem{MCK17}
M.~Khonji, S.~C.-K. Chau, K.~Elbassion, Combinatorial optimization of ac
  optimal power flow in radial distribution networks, arXiv preprint
  arXiv:1709.08431.

\bibitem{MCK18a}
M.~Khonji, S.~C.-K. Chau, K.~Elbassioni,
  \href{http://doi.acm.org/10.1145/3208903.3208934}{Challenges in scheduling
  electric vehicle charging with discrete charging rates in ac power networks},
  in: Proceedings of the Ninth International Conference on Future Energy
  Systems, e-Energy '18, 2018, pp. 183--186.
\newblock \href {http://dx.doi.org/10.1145/3208903.3208934}
  {\path{doi:10.1145/3208903.3208934}}.
\newline\urlprefix\url{http://doi.acm.org/10.1145/3208903.3208934}

\bibitem{MCK18b}
M.~Khonji, S.~C.-K. Chau, K.~Elbassioni,
  \href{http://doi.acm.org/10.1145/3208903.3213895}{Approximation scheduling
  algorithms for electric vehicle charging with discrete charging options}, in:
  Proceedings of the Ninth International Conference on Future Energy Systems,
  e-Energy '18, 2018, pp. 579--585.
\newblock \href {http://dx.doi.org/10.1145/3208903.3213895}
  {\path{doi:10.1145/3208903.3213895}}.
\newline\urlprefix\url{http://doi.acm.org/10.1145/3208903.3213895}

\bibitem{MCK18c}
M.~Khonji, S.~C.-K. Chau, K.~Elbassioni, Combinatorial optimization of electric
  vehicle charging in ac power distribution networks, in: IEEE International
  Conference on Communications, Control, and Computing Technologies for Smart
  Grids, SmartGridComm '18, 2018.

\end{thebibliography}

\section*{Appendix}

\subsection*{Proof of Lemma~\ref{lem:tb}}

\begin{customlem}{\ref{lem:tb}}[\cite{KKCMZ16}] 
	Given a set of 2D vectors $\{d_i \in \RR^2\}_{i=1}^n$
	$$ \frac{\sum_{i=1}^n |d_i| }{\bigg| \sum_{i =1}^n d_i \bigg|} \le {\sec \tfrac{\theta}{2}},$$
	where $\theta$ is the maximum angle between any pair of vectors and $0 \le \theta \le \frac{\pi}{2}$.
\end{customlem}
\begin{proof}
	If $\theta=0$ then the statement is trivial, therefore we assume otherwise. We prove $\frac{(\sum_{i=1}^n |d_i| )^2}{|\sum_{i=1}^n d_i |^2} \le \frac{2}{\cos \theta + 1}$  by induction (notice that $\sec \tfrac{\theta}{2} = \sqrt{\frac{2}{\cos \theta + 1}}$).  First, we expand  the left-hand side by
	\begin{align}
	&\frac{  \sum_{i=1}^n |d_i|^2 +  2\sum_{1\le i < j \le n} |d_i| \cdot |d_j|  } { \sum_{i =1}^n |d_i|^2 +  2\sum_{1\le i < j \le n} |d_i| \cdot |d_j| (\sin \theta_i \sin \theta_j + \cos \theta_i \cos \theta_j)}\notag\\
	&=\frac{ \sum_{i=1}^n |d_i|^2 +  2\sum_{1\le i < j \le n} |d_i| \cdot |d_j|  } { \sum_{i=1}^n |d_i|^2 +  2\sum_{1\le i < j \le n} |d_i| \cdot |d_j| \cos (\theta_i - \theta_j)}, \label{eq:ind}
	\end{align}
	where $\theta_i$ is the angle that $d_i$ makes with the $x$ axis.

	Consider the base case: $n= 2$. Eqn.~\raf{eq:ind} becomes
	\begin{align}
	\frac{|d_1|^2 +|d_2|^2 + 2 |d_1|\cdot |d_2|}{|d_1|^2 +|d_2|^2 + 2 |d_1|\cdot |d_2| \cos(\theta)} = f\Big(\frac{|d_2|}{|d_1|}\Big),
	\end{align}
	where $ f(x) \triangleq \frac{1+x^2+2x}{1+x^2+2x\cos \theta}$. The first derivative is given by
	$$f'(x) = \frac{(1+x^2+2x \cos \theta )(2x + 2) - 1+x^2 + 2x)(2x + 2 \cos \theta)}{(1 + x^2 + 2x \cos \theta )^2}$$ 
	$f'(x)$ is zero only when $x=1$. Hence, $f(1)$ is an extreminum point.  We compare $f(1)$ with $f(x)$ at the  boundaries $x\in \{0,\infty\}$:$$f(1) = \frac{2}{\cos \theta + 1} \ge f(0) = \lim_{x \to \infty} f(x) = 1$$
	Therefore, $f(x)$ has a global maximum of $\frac{2}{\cos \theta  + 1}$.
	
	Next, we proceed to the inductive step. We assume $\frac{\sum_{i=1}^{r-1} |d_i| }{\big| \sum_{i=1}^{r-1} d_i \big|} \le \sqrt{\frac{2}{\cos \theta + 1}}$ where $r \in \{1,\ldots, n\}$. W.l.o.g., assume $\theta_2\ge\theta_3\ge \cdots \ge \theta_n \ge \theta_1$. Rewrite Eqn.~\raf{eq:ind} as
	\begin{equation}
	\frac{ ( \sum_{i=1}^r |d_i|)^2 } { \displaystyle \sum_{i=1}^r |d_i|^2 + 2 {\sum_{1\le i<j<r}} |d_i|  |d_j| \cos (\theta_i - \theta_j) +  2 |d_r| {\sum_{1\le i<r}} |d_i|  \cos (\theta_i - \theta_r)} \label{eq:den}
	\end{equation}
	Let $g(\theta_r)$ be the denominator of Eqn.~\raf{eq:den}. We take the  second derivative of $g(\theta_r)$:
	$$
	g''(\theta_r) = -2 |d_r| \sum_{1 \le i < r} |d_i| \cos(\theta_i - \theta_r)
	$$ 
	Notice that $\cos(\theta_i - \theta_r) \ge 0$, therefore the second derivative is always negative. This indicates that all local exterma in $[0,\theta_{r-1}]$ of $g(\theta_n)$ are local maxima. Hence, the minimum occurs at the boundaries:
	$$
	\min_{\theta_r \in [0,\theta_{r-1}]}  g(\theta_r) \in \{g(0), g(\theta_{r-1})\}
	$$
	If $\theta_r \in \{0,\theta_r\}$ , then there must exist at least a pair of vectors in $\{d_i\}_{i=1}^r$ with the same angle. Combining these two vectors into one, we can obtain an instance with $r-1$ vectors. Hence, by the inductive hypothesis, the same bound holds up to $r$ vectors.
\end{proof}

\end{document}